\pgfplotsset{compat=1.15}
\definecolor{ffffff}{rgb}{1,1,1}
\definecolor{qqqqff}{rgb}{0,0,1}
\definecolor{qqwuqq}{rgb}{0,0.39215686274509803,0}
\definecolor{zzttqq}{rgb}{0.6,0.2,0}
\definecolor{ttqqqq}{rgb}{0.2,0,0}
\definecolor{qqttqq}{rgb}{0,0.2,0}
\definecolor{qqccqq}{rgb}{0,0.8,0}
\theoremstyle{plain}
\newtheorem{theorem}{Theorem}
\newtheorem{lemma}{Lemma}
\tikzset{
pattern size/.store in=\mcSize, 
pattern size = 5pt,
pattern thickness/.store in=\mcThickness, 
pattern thickness = 0.3pt,
pattern radius/.store in=\mcRadius, 
pattern radius = 1pt}
\pgfpoint{\mcSize}{\mcSize}}
\tikzset{
pattern size/.store in=\mcSize, 
pattern size = 5pt,
pattern thickness/.store in=\mcThickness, 
pattern thickness = 0.3pt,
pattern radius/.store in=\mcRadius, 
pattern radius = 1pt}
\pgfpoint{\mcSize}{\mcSize}}
\tikzset{
pattern size/.store in=\mcSize, 
pattern size = 5pt,
pattern thickness/.store in=\mcThickness, 
pattern thickness = 0.3pt,
pattern radius/.store in=\mcRadius, 
pattern radius = 1pt}
\pgfpoint{\mcSize}{\mcSize}}
\pgfpoint{\mcSize}{\mcSize}}
\tikzset{
pattern size/.store in=\mcSize, 
pattern size = 5pt,
pattern thickness/.store in=\mcThickness, 
pattern thickness = 0.3pt,
pattern radius/.store in=\mcRadius, 
pattern radius = 1pt}
\pgfpoint{\mcSize}{\mcSize}}
\pgfpoint{\mcSize}{\mcSize}}
\tikzset{every picture/.style={line width=0.75pt}} 
\tikzset{
pattern size/.store in=\mcSize, 
pattern size = 5pt,
pattern thickness/.store in=\mcThickness, 
pattern thickness = 0.3pt,
pattern radius/.store in=\mcRadius, 
pattern radius = 1pt}
\pgfpoint{\mcSize}{\mcSize}}
\tikzset{
pattern size/.store in=\mcSize, 
pattern size = 5pt,
pattern thickness/.store in=\mcThickness, 
pattern thickness = 0.3pt,
pattern radius/.store in=\mcRadius, 
pattern radius = 1pt}
\pgfpoint{\mcSize}{\mcSize}}
\tikzset{every picture/.style={line width=0.75pt}}
\tikzset{
pattern size/.store in=\mcSize, 
pattern size = 5pt,
pattern thickness/.store in=\mcThickness, 
pattern thickness = 0.3pt,
pattern radius/.store in=\mcRadius, 
pattern radius = 1pt}
\pgfpoint{\mcSize}{\mcSize}}
\tikzset{every picture/.style={line width=0.75pt}} 
\tikzset{
pattern size/.store in=\mcSize, 
pattern size = 5pt,
pattern thickness/.store in=\mcThickness, 
pattern thickness = 0.3pt,
pattern radius/.store in=\mcRadius, 
pattern radius = 1pt}
\pgfpoint{\mcSize}{\mcSize}}
\tikzset{
pattern size/.store in=\mcSize, 
pattern size = 5pt,
pattern thickness/.store in=\mcThickness, 
pattern thickness = 0.3pt,
pattern radius/.store in=\mcRadius, 
pattern radius = 1pt}
\pgfpoint{\mcSize}{\mcSize}}
\tikzset{
pattern size/.store in=\mcSize, 
pattern size = 5pt,
pattern thickness/.store in=\mcThickness, 
pattern thickness = 0.3pt,
pattern radius/.store in=\mcRadius, 
pattern radius = 1pt}
\pgfpoint{\mcSize}{\mcSize}}
\tikzset{
pattern size/.store in=\mcSize, 
pattern size = 5pt,
pattern thickness/.store in=\mcThickness, 
pattern thickness = 0.3pt,
pattern radius/.store in=\mcRadius, 
pattern radius = 1pt}
\pgfpoint{\mcSize}{\mcSize}}
\tikzset{every picture/.style={line width=0.75pt}} 
\begin{document}

\title{Mixture equivalence principles and post-quantum theories of gravity}
\author{Samuel Fedida}
\affiliation{Centre for Quantum Information and Foundations, DAMTP, Centre for Mathematical Sciences, University of Cambridge, Wilberforce Road, Cambridge CB3 0WA, UK}
\author{Adrian Kent}
\affiliation{Centre for Quantum Information and Foundations, DAMTP, Centre for Mathematical Sciences, University of Cambridge, Wilberforce Road, Cambridge CB3 0WA, UK}
\affiliation{Perimeter Institute for Theoretical
	Physics, 31 Caroline Street North, Waterloo, ON N2L 2Y5, Canada.}
\date{\today}

\begin{abstract}
    We examine the mixture equivalence principle (MEP), which states that proper and improper mixed states with the same density matrix are always experimentally indistinguishable, and a weaker version, which states that this is sometimes true in gravity theories. We point out that M{\o}ller-Rosenfeld semiclassical gravity violates the weak MEP and that nonlinear extensions of quantum mechanics violate the MEP. We further demonstrate that modifications of the Born rule in quantum theory also typically violate the MEP. We analyse such violations in the context of thermal baths, where proper and improper thermal states induce different physical situations.  This has significant implications in the context of black hole physics. We argue that M{\o}ller-Rosenfeld semiclassical gravity is not the semiclassical limit of quantum gravity in the context of black hole spacetimes, even in the presence of $N\gg1$ matter fields.  
\end{abstract}

\maketitle

\section{Introduction}

In this paper we consider a class of extensions of quantum theory 
that have the same mathematical description of physical states
as quantum theory, but allow non-standard measurements on
these states.
Our prime motivating example is 
M{\o}ller-Rosenfeld semiclassical gravity, which we generally hereafter refer to simply as semiclassical gravity for brevity, and which is defined as taking the semiclassical Einstein field equations 
\begin{equation}
    \label{Semiclassical Field Equations}
    G_{\mu \nu} = \frac{8\pi G}{c^4} \expval{\hat{T}_{\mu\nu}}
\end{equation}
to be fundamental. 

Contradictions with observation \cite{Page1981} and apparent paradoxes \cite{Kibble1980,Duff1980,Gisin1990} arise in a theory coupling classical gravity to quantum matter in this way. However, these can at least partly be addressed by adding other hypotheses (as discussed in e.g. \cite{Tilloy2016,tilloy_binding_2018,Tilloy_2019,Grossardt2022}). For example, if the quantum state of matter undergoes objective collapses that normally prevent superpositions of macroscopically distinct matter states, or if we suppose that semiclassical gravity is an approximation with a restricted domain of validity, then there is not necessarily an immediate contradiction with experiment or observation. 
Moreover, models of this type can be constructed without introducing superluminal signalling \cite{Kent2005,Kent2025measurement}.   
So, there continues to be some motivation for exploring theories involving some form of semiclassical gravity and investigating their features, implications, relationships and potential problems.

Another alternative to quantizing gravity is to try to construct consistent theories that combine a classical gravitational field with a linear stochastic quantum state evolution (e.g. \cite{Tilloy_2019,Oppenheim2023}).  We do not consider these possibilities in this paper, which focusses on the properties
and implications of nonlinear theories.    

The semiclassical Einstein field equations (\ref{Semiclassical Field Equations}) are also often taken as a limit of quantum gravity.
In this context, it is not
usually argued that the above paradoxes
and contradictions with observation are problematic, and indeed
they are not much discussed.   One of the points of this paper is to resolve
what might initially seem a sociological puzzle by noting that
the relevant ``semiclassical limit of quantum gravity'' is not 
identical to M{\o}ller-Rosenfeld semiclassical gravity, though
closely related.  

The role of mixtures in the relevant theories is key to our discussions.
There are two kinds of state mixing in quantum theory: improper mixing, which arises as a description of a subsystem obtaining from tracing out a degree of freedom of an entangled multipartite system, and proper mixing, which 
arises from (classical) statistical considerations of an ensemble of states. 
A priori, one might expect that these could be physically and observationally distinct notions.   In particular, this might seem natural if pure states represent or imply distinct ontologies, as is the case in several versions of quantum theory.  In this case, proper and improper mixed states are ontologically distinct, and it seems logically possible that they might always be empirically distinguishable.  

In fact, of course, in standard quantum theory, both proper and improper mixed states are described by a density operator $\rho \in \mathscr{D}(\mathcal{H})$, which encapsulates all the physically accessible information obtainable from acting (only) on the relevant \mbox{(sub-)system}.
In particular, the expectation value of any operator $\mathcal{O}$ representing an observable of the (sub-)system is $\Tr(\mathcal{O}\rho)$, whether $\rho$ represents a proper or an improper mixed state.    Every experimental prediction of quantum theory can be expressed in terms of expectation values.  Hence, if a proper mixed state and an improper mixed state are represented by the same density matrix, 
$\rho_{\text{proper}} = \rho_{\text{improper}}$, then these two states are experimentally indistinguishable. 

However, we are interested in extensions of quantum theory that retain its mathematical description of physical states but allow non-standard measurements.  Thus, the states of a system $S$ are represented by rays in a Hilbert space $\mathcal{H}_S$ and those of a composite system $S_1 + \ldots + S_N$ by rays in 
$\mathcal{H}_{S_1} \otimes \ldots \otimes \mathcal{H}_{S_n}$.
For simplicity we restrict our discussion to countable probabilistic ensembles. 
We assume throughout that the preparation statistics of any ensemble are independent of the outcome statistics of any later measurements (i.e.~the probability weighting $\{p_i\}_i$ chosen for the $\psi_i$ does not influence the statistics of measurements on each individual $\psi_i$).    For completeness and later reference we spell out the
implication in the following trivial lemma.   
\begin{lemma}
    \label{Expval proper mixed states}
    Let 
    \begin{enumerate}
        \item $\{\ket{\psi_i}\}_i \in \mathcal{H}$ be an ensemble of pure states with associated probabilities $\{p_i\}_i, \, \sum_i p_i = 1$, where $\mathcal{H}$ is the Hilbert space of the theory, 
        \item $\rho_\text{proper} = \sum_i p_i \psi_i \in \mathscr{D}(\mathcal{H})$ be a proper mixed state, where we wrote the pure state $\psi_i = \ket{\psi_i}\bra{\psi_i} \in \mathscr{D}(\mathcal{H})$,
        \item $\mathcal{O} \in \mathcal{L}(\mathcal{H})$ be an observable, i.e. $\mathcal{O}^\dagger = \mathcal{O}$, where the expectation value of measurements of $\mathcal{O}$ on a pure $\psi_i$ is denoted $\expval{\mathcal{O}}_{\psi_i}$.
    \end{enumerate}
    Then, 
    \begin{equation}
        \expval{\mathcal{O}}_{\rho_\text{proper}} = \sum_i p_i \expval{\mathcal{O}}_{\psi_i} \, .
    \end{equation}
\end{lemma}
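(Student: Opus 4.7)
The plan is to prove this directly from the operational definition of a proper mixture, combined with the independence hypothesis stated just before the lemma. The content of the lemma is essentially the law of total expectation applied to the classical randomness used to prepare the ensemble, and no appeal to the Born rule or to the trace formula is required. This is important, because the paper's framework explicitly allows non-standard measurement rules, so we cannot simply write $\expval{\mathcal{O}}_{\rho_{\text{proper}}} = \Tr(\mathcal{O}\rho_{\text{proper}})$ and expand by linearity.

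First I would recall that, by construction, a proper mixed state $\rho_{\text{proper}} = \sum_i p_i \psi_i$ corresponds to a classical ensemble in which the state $\ket{\psi_i}$ is prepared with probability $p_i$. Measuring the observable $\mathcal{O}$ on a system drawn from this ensemble is therefore a compound experiment: one samples an index $i$ according to the distribution $\{p_i\}_i$, and then performs the measurement of $\mathcal{O}$ on the pure state $\ket{\psi_i}$ actually produced. Let $X$ denote the measurement outcome and let $I$ denote the random index.

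Next, I would invoke the independence assumption: the distribution of $X$ conditional on $I=i$ depends only on $\ket{\psi_i}$ and on the measurement procedure for $\mathcal{O}$, and not on the weights $\{p_j\}_j$ of the ensemble. This is exactly what allows one to write $\mathbb{E}[X \mid I = i] = \expval{\mathcal{O}}_{\psi_i}$, where the right-hand side is the pure-state expectation value defined by whatever measurement rule the theory in question uses. The law of total expectation then gives
\begin{equation}
    \expval{\mathcal{O}}_{\rho_{\text{proper}}} = \mathbb{E}[X] = \sum_i \Pr(I=i)\, \mathbb{E}[X \mid I = i] = \sum_i p_i \expval{\mathcal{O}}_{\psi_i},
\end{equation}
which is the claimed identity.

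There is essentially no technical obstacle here; the only thing one has to be careful about is to avoid implicitly using the Born-rule/trace formula, since the whole point of the surrounding discussion is to prepare for theories in which $\expval{\mathcal{O}}_{\psi_i}$ is \emph{not} given by $\bra{\psi_i}\mathcal{O}\ket{\psi_i}$. The independence hypothesis is doing all of the real work: without it, the weights $\{p_j\}_j$ could in principle feed back into the single-shot statistics on $\ket{\psi_i}$ and the decomposition would fail, which is precisely the kind of MEP-violating behaviour the paper will examine later.
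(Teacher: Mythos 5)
Your proposal is correct and is essentially the paper's own argument: the paper likewise derives $\expval{\mathcal{O}}_{\rho_\text{proper}} = \sum_i p_i \expval{\mathcal{O}}_{\psi_i}$ directly from the assumed independence of preparation and outcome statistics, without invoking the Born rule or the trace formula. Your version merely makes the conditioning on the preparation index and the law of total expectation explicit, which the paper leaves implicit.
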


\begin{proof}
    Any measurement of observable $\mathcal{O}$ conducted on a pure state $\psi_i$ has expectation $\expval{\mathcal{O}}_{\psi_i}$. Thus, undertaking measurements of an observable $\mathcal{O}$ conducted on an ensemble of pure states $\{\ket{\psi_i}\}_i$ with associated preparation probabilities $\{p_i\}_i$ has expectation
    \begin{equation}
        \sum_i p_i \expval{\mathcal{O}}_{\psi_i} \, , 
    \end{equation}
    since the preparation statistics and the outcome statistics are independent.
\end{proof}

We also suppose that the system $S_1$ (or later $S$) we consider may always be taken to be part of a 
composite system $\ket{\Psi} \in \mathcal{H}_{S_1} \otimes \mathcal{H}_{S_2}$,
where $\mathcal{H}_{S_2}$ is infinite-dimensional. 

Given an ensemble $\{\ket{\psi_i}\}_i \in \mathcal{H}_S$ of pure states with associated probabilities $\{p_i\}_i, \, \sum_i p_i = 1$, 
we define the corresponding proper mixed state
\begin{equation}
    \rho_{\text{proper}} = \sum_i p_i \ket{\psi_i} \bra{ \psi_i } \, . 
\end{equation}
   
Given a pure state $\ket{\Psi} \in \mathcal{H}_{S_1} \otimes \mathcal{H}_{S_2}$ we define the corresponding improper mixed state
for $S_1$ as
\begin{equation}
    \rho_{\text{improper}} = \Tr_{\mathcal{H}_{S_2}} ( \ket{\Psi} \bra{ \Psi } ) \, . 
\end{equation}
Because our extensions include measurements not allowed in quantum theory, it may be possible to distinguish a state of $S_1$ prepared as a probabilistic ensemble from the state representing it as a subsystem
of a composite system in an entangled pure state, even when the 
corresponding mixed states are mathematically equal, 
i.e. when $\rho_{\text{proper}} = \rho_{\text{improper}}$.  

There are various interesting possibilities. 
One is that the extended theory satisfies what we will call the Mixture Equivalence Principle (MEP) \cite{Kent2025measurement}. 
\begin{quote}

  {\bf MEP:}  \qquad   Let $\{\ket{\psi_i}\}_i \in \mathcal{H_S}$ be an ensemble of pure states with associated probabilities $\{p_i\}_i, \, \sum_i p_i = 1$, where $\mathcal{H_S}$ is the Hilbert space representing a system $S$, and $\rho_{\text{proper}}$ be the corresponding proper mixed state.   Let $\ket{ \Psi } \in \mathcal{H_S} \otimes  \mathcal{H_A}$ be a state in the Hilbert space representing the combined system $S+A$, and $\rho_{\text{improper}}$ be the corresponding improper mixed state of $S$.   Suppose $\rho_{\text{proper}}= \rho_{\text{improper}}$.   
  Then no experiment on $S$ can distinguish these two cases.
  
\end{quote}

By transitivity, the MEP also implies that no experiment can distinguish any pair of proper mixtures, or any pair of improper mixtures, represented by the same density matrix.   
The MEP is well understood to be a fundamental feature of standard 
quantum theory, and in particular of the physics of decoherence.  When a quantum system interacts with an environment or a macroscopic apparatus or observer, the (improper) reduced density matrix of the quantum system, assuming unitary evolution, asymptotically tends to the (proper) density matrix defined by an ensemble of pure states with associated Born probabilities in some natural basis defined by the interaction. Hence the statistics of experiments conducted on the quantum system are identical to those one could get from preparing an ensemble of pure states with these probabilities. 

Another is what we will call the Weak Mixture Equivalence Principle (WMEP).
\begin{quote}
 {\bf WMEP:}  \qquad  There exists an ensemble of experimentally distinguishable pure states $\{\ket{\psi_i}\}_i \in \mathcal{H_S}$ with associated probabilities $\{p_i\}_i, \, \sum_i p_i = 1$,  and 
 a state $\ket{ \Psi } \in \mathcal{H_S} \otimes  \mathcal{H_A}$, 
 such that $\rho_{\text{proper}}= \rho_{\text{improper}}$ and 
 such that no experiment on $S$ can distinguish these two cases. 
\end{quote}

Related to these is the so-called Purification Principle (PP), which was defined for more general theories but makes sense in the restricted context we consider here. 
\cite{PhysRevA.81.062348,Galley2018}. 
\begin{quote}
{\bf PP:}  \qquad  For any ensemble of pure states $\{\ket{\psi_i}\}_i \in \mathcal{H_S}$ with associated probabilities $\{p_i\}_i, \, \sum_i p_i = 1$, there exists a state $\ket{ \Psi } \in \mathcal{H_S} \otimes  \mathcal{H_A}$, such that $\rho_{\text{proper}}= \rho_{\text{improper}}$ and such that no experiment on $S$ can distinguish these two cases. \footnote{In the original formulation \cite{PhysRevA.81.062348}, the state $\ket{\Psi}$ is unique up to the action of local unitaries on $\mathcal{H_A}$.} 
\end{quote}

Clearly, for the class of theories we consider here,
the MEP implies the PP, which implies the WMEP.    
As we will see, even the WMEP does not hold in 
semiclassical gravity.   

\section{Gravitational violations of Mixture Equivalence Principles}

\label{Proper and improper mixed states}

In the context of theories of gravity, it is natural to 
define gravitational versions of the above principles.
For example, we define the Gravitational Mixture Equivalence Principle (GMEP).
\begin{quote}
{\bf GMEP:}  \qquad  
Let $\{\ket{\psi_i}\}_i \in \mathcal{H_S}$ be an ensemble of pure states with associated probabilities $\{p_i\}_i, \, \sum_i p_i = 1$, where $\mathcal{H_S}$ is the Hilbert space representing a system $S$, and $\rho_{\text{proper}}$ be the corresponding proper mixed state.   Let $\ket{ \Psi } \in \mathcal{H_S} \otimes  \mathcal{H_A}$ be a state in the Hilbert space representing the combined system $S+A$, and $\rho_{\text{improper}}$ be the corresponding improper mixed state of $S$.   Suppose $\rho_{\text{proper}}= \rho_{\text{improper}}$.   Then no measurable gravitational effect of $S$ can distinguish these two cases.
\end{quote}

It is logically possible that any of the above principles 
could hold for all experiments not involving gravity, but that
their gravitational versions could fail.
It is also possible that the gravitational version of any of these
principles could fail only via violations of the original principle. In this case, any gravitational effect that distinguishes proper and improper mixed states arises because of the (perhaps amplified) gravitational effects of an apparatus carrying out measurements that do not involve gravity and that distinguish proper and improper mixed states. 

We focus first on the (G)MEP.   
Note that it is possible that the (G)MEP could be violated by running a single experiment involving some classical or quantum observable $\mathcal{O}$ -- in which case we talk about a \textit{one-shot} (G)MEP violation -- but that the asymptotic repetition of that experiment involving that same observable $\mathcal{O}$ does not yield any statistical difference between measurements on proper and improper mixed states. If there is a statistical difference, such that 
\begin{equation}
    \expval{\mathcal{O}}_{\rho_\text{proper}} \neq \expval{\mathcal{O}}_{\rho_\text{improper}} \, ,
\end{equation}
where the expectation value may be classical or quantum, we say that the (G)MEP is \textit{statistically} violated with measurements of $\mathcal{O}$ for that experiment. Of course, statistical violations of the (G)MEP with some observable $\mathcal{O}$ for a given experiment necessarily imply one-shot violations of the (G)MEP with that same observable in the experiment, but the converse is not true. The (G)MEP may thus either hold, be violated one-shot, or be violated both one-shot and statistically with respect to an observable $\mathcal{O}$ in an experiment\footnote{Note here that, although the terminology may suggest the opposite intuition, one-shot (G)MEP violations are weaker than statistical (G)MEP violations. That is, a statistical (G)MEP violation for a given observable implies a one-shot (G)MEP violation for that observable. The converse is untrue, as seen in the case of semiclassical gravity in the following section.}.

\subsection{Semiclassical gravity and one-shot GMEP violations}

\label{Semiclassical gravity section}

Semiclassical gravity violates the GMEP, as we now review (see discussions in e.g. \cite{tilloy_binding_2018,Kent2021b,Grossardt2022}). 
Consider the thought experiment depicted in Figure \ref{fig:MEP violation}.
\begin{figure}[b!]
    \centering
    \includegraphics[width=1 \columnwidth]{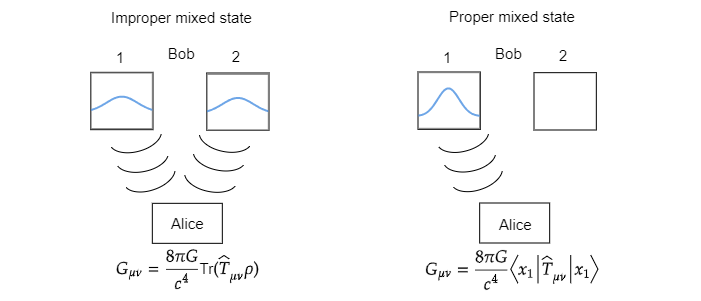}
    \caption{Gravitational violation of the MEP in semiclassical gravity}
    \label{fig:MEP violation}
\end{figure}
Alice and Bob agree on an inertial reference frame in which they will remain at agreed fixed separation during the experiment, and on the experiment's protocol. There are now two versions.

In the version of the experiment depicted on the right of Figure \ref{fig:MEP violation}, 
Bob then prepares a particle of mass $m$ whose wavefunction is localised around points $\mathbf{x}_1$ and $\mathbf{x}_2$ in one of two boxes 1 and 2. Bob chooses the box randomly, with probabilities $\abs{a_i}^2, \, i=1,2$ for each (say, by tossing a weighted classical coin whose weights are also known by Alice). Alice observes a test mass $M$ initially at spacetime point $(\mathbf{y},t')$. She initially has no information about which of the two boxes the particle is in so, for her, the quantum state of Bob's mass is initially in the proper mixture 
\begin{equation}\label{aliceproper}
    \rho_{\text{proper}} = \abs{a_1}^2 \ket{\mathbf{x}_1} \bra{\mathbf{x}_1} + \abs{a_2}^2\ket{\mathbf{x}_2} \bra{\mathbf{x}_2} \, .
\end{equation}
However, $\rho_{\text{proper}}$ does not ultimately describe the actual state of Bob from Alice's perspective. Once gravitational information (travelling at light speed) affects Alice's mass, it does so sourced by Bob's particle located at a definite position. If not, Bob and Alice would make different predictions. From Bob's perspective, his mass has a definite position, so the gravitational interaction is not defined by
\begin{equation}
    G_{\mu\nu} = \frac{8\pi G}{c^4} \Tr(\hat{T}_{\mu\nu} \rho_{\text{proper}})
\end{equation}
but rather by
\begin{equation}
    \label{Semiclassical Einstein Field Equations Proper Case}
    G_{\mu\nu} = \frac{8\pi G}{c^4} \expval{\hat{T}_{\mu\nu}}{\mathbf{x}_i}
\end{equation}
for the outcome $i=1$ or $2$ (with no summation on the $i$ index). 

This is true in the standard understanding of both quantum gravity and semiclassical gravity.  For example, in the nonrelativistic limit of semiclassical gravity \eqref{Semiclassical Field Equations} we can define the semiclassical Newtonian potential to follow Poisson's equation
\begin{equation}
    \label{Semiclassical Newtonian potential}
    \Delta \Phi(\mathbf{x}) = 4\pi G \expval{\hat{\rho}(\mathbf{x})}
\end{equation}
where $\hat{\rho}(\mathbf{x})$ is the mass density operator of the quantum matter. For a single particle of mass $m$, the semiclassical gravitational field at some point $\mathbf{y}$ (Alice's position) can be found from Poisson's equation \eqref{Semiclassical Newtonian potential} and is given by
\begin{equation}
    \label{Semiclassical gravitational field proper}
    \Phi(\mathbf{y})_{(\text{proper})} = \frac{-Gm}{\abs{\mathbf{x}_i-\mathbf{y}}}
\end{equation}
for the outcome $i$. 

Thus, as soon as Alice enters the future light cone of Bob's random box choice, she can obtain information about that choice, and updates the proper mixture to the relevant pure state.  

In the version on the left, on the other hand, Bob keeps his random choice indeterminate at the quantum level by preparing an entangled state
\begin{equation}\label{bobentangled}
a_1 \ket{0} \ket{\mathbf{x}_1} + {a_2} \ket{1} \ket{\mathbf{x}_2}  \, ,
\end{equation}
where $\ket{0},\ket{1}$ are orthogonal states of an ancilla qubit, 
the mass $m$ is initially in an improper mixture 
\begin{equation}
 \rho_{\text{improper}} = \abs{a_1}^2\ket{\mathbf{x}_1} \bra{\mathbf{x}_1} + \abs{a_2}^2\ket{\mathbf{x}_2} \bra{\mathbf{x}_2} \, . 
\end{equation}
   According to semiclassical gravity,  Alice's mass follows the semiclassical Einstein field equations
\begin{equation}
    G_{\mu\nu} = \frac{8\pi G}{c^4} \Tr(\hat{T}_{\mu\nu} \rho_{\text{improper}}) \, , 
\end{equation}
which differs from \eqref{Semiclassical Einstein Field Equations Proper Case}, even though $\rho_{\text{improper}} = \rho_{\text{proper}}$. In the nonrelativistic limit, the semiclassical gravitational field is now given by
\begin{equation}
    \label{Semiclassical gravitational field improper}
    \Phi(\mathbf{y})_{(\text{improper})} = -Gm\Big(\frac{\abs{a_1}^2}{\abs{\mathbf{x}_1-\mathbf{y}}} + \frac{\abs{a_2}^2}{\abs{\mathbf{x}_2-\mathbf{y}}}\Big) \, .
\end{equation}
Determining the classical gravitational field provides information about the quantum matter state by giving an estimate of the $\abs{a_i}^2$ for $i=1,2$. Since $\Tr(\rho)=1$, one reading of the gravitational field suffices to estimate both $\abs{a_1}^2$ and $\abs{a_2}^2$.
The dynamics of her test mass give her no information about the entangled state other than $\rho_{\text{improper}}$.
This suggests that her observations should not alter the entangled state, and indeed semiclassical gravity postulates no alteration.  

Accordingly to semiclassical gravity, Alice can thus perform a single (\textit{one-shot}) Cavendish experiment to distinguish between the case where Bob has a proper mixed state \eqref{Semiclassical gravitational field proper} and that where he has an improper mixed state \eqref{Semiclassical gravitational field improper} since $\Phi(\mathbf{y})_{(\text{proper})} \neq \Phi(\mathbf{y})_{(\text{improper})}$. Semiclassical gravity thus allows observers to distinguish between proper and improper mixed states. In other words, the GMEP is violated in semiclassical gravity.
In contrast, it holds in standard quantum gravity, according to which
Alice's measurement of the gravitational field would effectively 
collapse Bob's entangled state into one of the components of \ref{bobentangled}, leading to the same probability distribution of outcomes obtained from the proper mixture \ref{aliceproper}.  

In the standard account of semiclassical gravity, the MEP holds for quantum matter experiments in which spacetime backreactions are undetectable, but not for general experiments involving gravity: i.e. the MEP fails only via the failure of the GMEP.  
However, it is not clear from this discussion whether there actually is a fully consistent semiclassical gravity theory: we consider this further below.

\subsection{Mixtures of mixtures and GMEP violations}

\label{Beyond proper and improper mixed states}

Consider, as above, a thought experiment involving Alice and Bob with now four boxes, as depicted in Figure \ref{fig:four boxes}. Alice and Bob agree on an inertial reference frame in which they will remain at agreed fixed separation during the experiment, and on the experiment's protocol. There are now three versions. 

\begin{figure*}
    \centering
    \includegraphics[width = 1.9 \columnwidth]{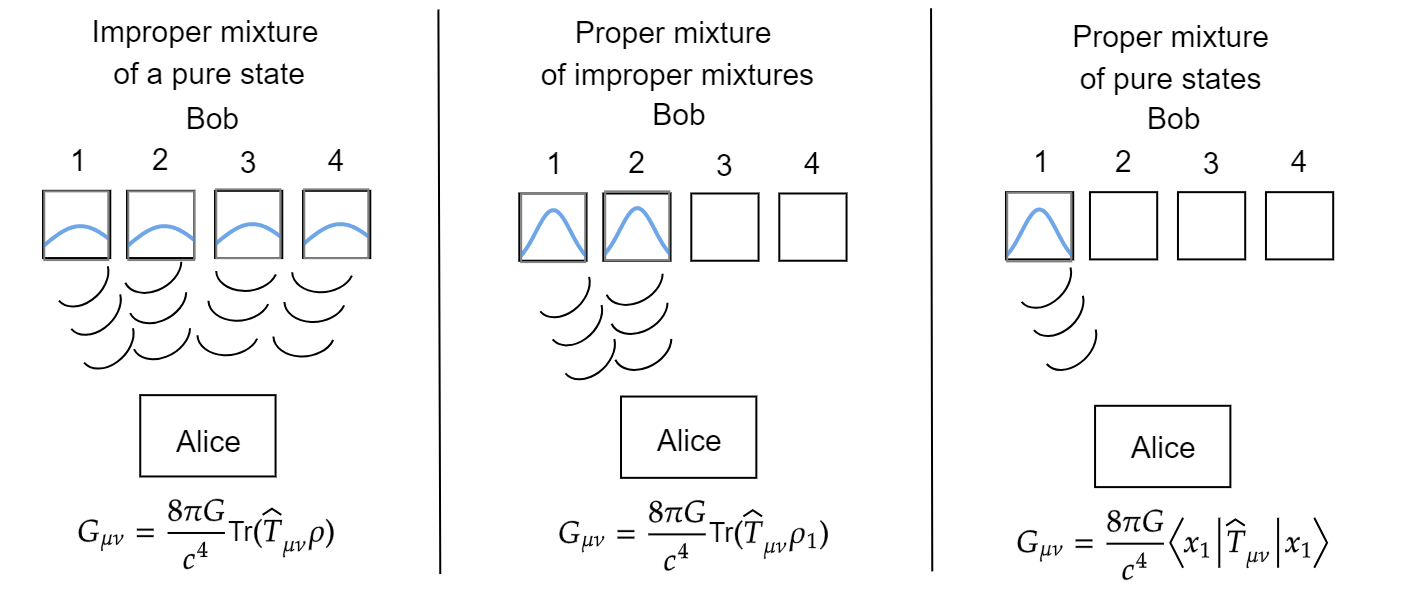}
    \caption{Gravitational violation of the MEP in semiclassical gravity, which distinguishes between three types of mixtures: improper mixtures of pure states, proper mixtures of improper mixtures and proper mixtures of pure states.}
    \label{fig:four boxes}
\end{figure*}

\subsubsection{Proper mixture of pure states} 

In the version of the experiment depicted on the far right of Figure \ref{fig:four boxes}, Bob prepares a particle of mass $m$ whose wavefunction is localised around points $\mathbf{x}_1$, $\mathbf{x}_2$, $\mathbf{x}_3$ and $\mathbf{x}_4$ in one of four boxes 1,2,3 and 4. Bob chooses the box randomly, with probabilities $p_1 \abs{a_1}^2, p_1 \abs{a_2}^2, p_2 \abs{a_3}^2, p_2 \abs{a_4}^2$ respectively (say, by tossing a weighted four-sided classical dice whose weights are also known by Alice). Alice observes a test mass $M$ initially at spacetime point $(\mathbf{y},t')$. She initially has no information about which of the two boxes the particle is in so, for her, the quantum state of Bob's mass is initially in the proper mixture 
\begin{multline}\label{aliceproper2}
    \rho_{\text{proper}} = p_1 \abs{a_1}^2 \ket{\mathbf{x}_1} \bra{\mathbf{x}_1} + p_1 \abs{a_2}^2\ket{\mathbf{x}_2} \bra{\mathbf{x}_2} \\ + p_2 \abs{a_3}^2 \ket{\mathbf{x}_3} \bra{\mathbf{x}_3} + p_2 \abs{a_4}^2\ket{\mathbf{x}_4} \bra{\mathbf{x}_4} \, .
\end{multline}
However, $\rho_{\text{proper}}$ does not ultimately describe the actual state of Bob from Alice's perspective. Once gravitational information (travelling at light speed) affects Alice's mass, it does so sourced by Bob's particle located at a definite position. If not, Bob and Alice would make different predictions. From Bob's perspective, his mass has a definite position, so the gravitational interaction is not defined by
\begin{equation}
    G_{\mu\nu} = \frac{8\pi G}{c^4} \Tr(\hat{T}_{\mu\nu} \rho_{\text{proper}})
\end{equation}
but rather by
\begin{equation}
    \label{Semiclassical Einstein Field Equations Proper Case2}
    G_{\mu\nu} = \frac{8\pi G}{c^4} \expval{\hat{T}_{\mu\nu}}{\mathbf{x}_i}
\end{equation}
for the outcome $i=1$ or $2$ or $3$ or $4$ (with no summation on the $i$ index). Again, this is true in the standard understanding of both quantum gravity and semiclassical gravity. Thus, as soon as Alice enters the future light cone of Bob's random box choice, she can obtain information about that choice, and updates the proper mixture to the relevant pure state. 

\subsubsection{Single pure entangled state}

In the version on the far left of Figure \ref{fig:four boxes}, on the other hand, Bob keeps his random choice indeterminate at the quantum level by preparing a pure entangled state
\begin{equation}\label{bobentangled2}
\sqrt{p_1} (a_1  \ket{0} \ket{\mathbf{x}_1} + a_2 \ket{1} \ket{\mathbf{x}_2}) + \sqrt{p_2} (a_3  \ket{2} \ket{\mathbf{x}_3} + a_4  \ket{3} \ket{\mathbf{x}_4}) \, ,
\end{equation}
where $\ket{0},\ket{1},\ket{2},\ket{3}$ are orthogonal states of an ancilla qudit, 
the mass $m$ is initially in an improper mixture 
\begin{multline}
 \rho_{\text{improper}} = p_1 \abs{a_1}^2 \ket{\mathbf{x}_1} \bra{\mathbf{x}_1} + p_1 \abs{a_2}^2\ket{\mathbf{x}_2} \bra{\mathbf{x}_2} \\ + p_2 \abs{a_3}^2 \ket{\mathbf{x}_3} \bra{\mathbf{x}_3} + p_2 \abs{a_4}^2\ket{\mathbf{x}_4} \bra{\mathbf{x}_4} \, . 
\end{multline}
According to semiclassical gravity,  Alice's mass follows the semiclassical Einstein field equations
\begin{equation}
    \label{eqn:scg impr}
    G_{\mu\nu} = \frac{8\pi G}{c^4} \Tr(\hat{T}_{\mu\nu} \rho_{\text{improper}}) \, .
\end{equation}
 This differs from \eqref{Semiclassical Einstein Field Equations Proper Case2}, even though $\rho_{\text{improper}} = \rho_{\text{proper}}$. This violates the GMEP, as seen previously.

\subsubsection{Mixtures of mixtures}

Another scenario is described in the centre of Figure \ref{fig:four boxes}.  Bob first tosses a weighted classical coin (whose weights are also known by Alice), and prepares, with probability $p_1$, the entangled state 
\begin{equation}
    \ket{\psi_1} = a_1 \ket{0} \ket{\mathbf{x}_1} + {a_2} \ket{1} \ket{\mathbf{x}_2}
\end{equation}
and with probability $p_2 = 1 - p_1$ the entangled state
\begin{equation}
    \ket{\psi_2} = a_3 \ket{0} \ket{\mathbf{x}_3} + {a_4} \ket{1} \ket{\mathbf{x}_4}
\end{equation}
where $\ket{0},\ket{1}$ are orthogonal states of an ancilla qubit living in some Hilbert space $\mathcal{H}_1$. The mass $m$ is initially in a mixture
\begin{multline}
 \rho_{\text{prop}\vert\text{imp}} = p_1 \abs{a_1}^2 \ket{\mathbf{x}_1} \bra{\mathbf{x}_1} + p_1 \abs{a_2}^2\ket{\mathbf{x}_2} \bra{\mathbf{x}_2} \\ + p_2 \abs{a_3}^2 \ket{\mathbf{x}_3} \bra{\mathbf{x}_3} + p_2 \abs{a_4}^2\ket{\mathbf{x}_4} \bra{\mathbf{x}_4} \, . 
\end{multline} 
According to semiclassical gravity,  Alice's mass follows the semiclassical Einstein field equations
\begin{equation}
    \label{Semiclassical Einstein Field Equations Improper Case2}
    G_{\mu\nu} = \frac{8\pi G}{c^4} \Tr(\hat{T}_{\mu\nu} \rho_j) \, , 
\end{equation}
where $j = 1$ or $2$ for \begin{equation}
    \rho_1 = \Tr_{\mathcal{H}_1}[\ket{\psi_1}\bra{\psi_1}] = \abs{a_1}^2\ket{\mathbf{x}_1} \bra{\mathbf{x}_1} + \abs{a_2}^2\ket{\mathbf{x}_2} \bra{\mathbf{x}_2}
\end{equation} and
\begin{equation}
    \rho_2 = \Tr_{\mathcal{H}_1}[\ket{\psi_2}\bra{\psi_2}] = \abs{a_3}^2\ket{\mathbf{x}_3} \bra{\mathbf{x}_3} + \abs{a_4}^2\ket{\mathbf{x}_4} \bra{\mathbf{x}_4} \, .
\end{equation}
This differs from both \eqref{Semiclassical Einstein Field Equations Proper Case2} and \eqref{eqn:scg impr} although $\rho_{\text{prop}\vert\text{imp}} = \rho_{\text{improper}} = \rho_{\text{proper}}$. Further note that, had Bob prepared instead the entangled state
\begin{equation}
    \ket{\psi_3} =  a_1 \ket{0} \ket{\mathbf{x}_1} + a_3 \ket{1} \ket{\mathbf{x}_3}
\end{equation}
with probability $p_1$ and
\begin{equation}
    \ket{\psi_4} = a_2 \ket{0} \ket{\mathbf{x}_2} + a_4 \ket{1} \ket{\mathbf{x}_4}
\end{equation}
with probability $p_2 = 1-p_1$, then the initial mixture would be the same as all the above, although Alice's mass would follow the semiclassical Einstein field equations
\begin{equation}
    G_{\mu\nu} = \frac{8\pi G}{c^4} \Tr(\hat{T}_{\mu\nu} \rho_k) \, , 
\end{equation}
for $k=3$ or $4$ for
\begin{equation}
    \rho_3 = \Tr_{\mathcal{H}_1}[\ket{\psi_3}\bra{\psi_3}] = \abs{a_1}^2\ket{\mathbf{x}_1} \bra{\mathbf{x}_1} + \abs{a_3}^2\ket{\mathbf{x}_3} \bra{\mathbf{x}_3}
\end{equation} and
\begin{equation}
    \rho_4 = \Tr_{\mathcal{H}_1}[\ket{\psi_4}\bra{\psi_4}] = \abs{a_2}^2\ket{\mathbf{x}_2} \bra{\mathbf{x}_2} + \abs{a_4}^2\ket{\mathbf{x}_4} \bra{\mathbf{x}_4} \, .
\end{equation}
This differs from either of the above.  Similarly, one can think of an entangled state between boxes $1-4$ and $2-3$, and generalise such considerations to any finite number of boxes. Thus, one needs to distinguish carefully  constructions of mixtures in (G)MEP-violating theories.

We note here that there is no operational distinction in this example between an improper mixture of a proper mixture $\Tr_{\mathcal{H}_1}[\{p_1,\ket{\psi_1};p_2,\ket{\psi_2}\}]$, and a proper mixture of improper mixtures $\{p_1,\Tr_{\mathcal{H}_1}[\ket{\psi_1}];p_2,\Tr_{\mathcal{H}_1}[\ket{\psi_2}]\}$.   In extensions of quantum theory where improper mixtures are not built from partial traces -- but rather from a nonlinear map -- this need not hold. One could also postulate non-standard measurement rules that treat these two cases separately.

\subsection{Semiclassical gravity violates the GWMEP}

We define the Gravitational Weak Mixture Equivalence Principle (GWMEP) as follows.

\begin{quote}
{\bf GWMEP:}  \qquad  
There exists an ensemble of experimentally distinguishable pure states $\{\ket{\psi_i}\}_i \in \mathcal{H_S}$ with associated probabilities $\{p_i\}_i, \, \sum_i p_i = 1$,  and 
 a state $\ket{ \Psi } \in \mathcal{H_S} \otimes  \mathcal{H_A}$, 
 such that $\rho_{\text{proper}}= \rho_{\text{improper}}$ and 
 such that no measurable gravitational effect
 of $S$ can distinguish these two cases. 
 \end{quote}

Consider a system $S$ in a proper mixture defined by an ensemble $E$ of experimentally distinguishable pure states $\{\ket{\psi_i}\}_i \in \mathcal{H_S}$ with associated probabilities $\{p_i\}_i, \, \sum_i p_i = 1$ and $p_i>0$.   All experimental measurements ultimately involve
measurements of mass densities in localised regions.  (This is true however one models measurement outcomes: for example, as determined by the location of an apparatus pointer, ink on paper, or local densities of chemical species in an observer's brain.)    So, the experimental distinguishability of 
the $\{\ket{\psi_i}\}_i$ means that there must be an ancilla 
$A$ initially in some reference state $\ket{0}_A$, a unitary operation $U$ acting on $\mathcal{H_A}\otimes\mathcal{H_S}  $ and an ideal position measurement $\{P_j \}$
acting on $\mathcal{H_A} \otimes \mathcal{H_S}$ such that 
the sets $I_j = \{E_j\}$ of the expectation values of the $P_j$ on the states $U (\ket{0}_A \otimes \ket{\psi_i}  )$
are distinct.   

Now suppose that $S$ is a subsystem of $S+S'$.   
A joint state $\ket{\Psi} \in \mathcal{H_S} \otimes \mathcal{H_{S'}}$
is indistinguishable from the ensemble $E$ by standard quantum
operations (not involving gravity) on $S$ if and only if
\begin{equation}
\ket{\Psi} = \sum_i (p_i )^{1/2} \ket{\psi_i}_S \otimes \ket{i}_{S'}  \, , 
\end{equation}
where $\{ \ket{i}_{S'} \}$ are an orthonormal set in $\mathcal{H_{S'}}$. 
Introducing the ancilla $A$ and applying the operation
$U \otimes I_{S'}$ to $\ket{0}_A \otimes\ket{\Psi}  $,
we obtain 
\begin{equation}
U \otimes I_{S'} ( \ket{0}_A  \otimes \ket{\Psi}) = \sum_i (p_i )^{1/2}  U (\ket{0}_A \otimes \ket{\psi_i}) \otimes \ket{i}_{S'} \, . 
\end{equation}
By construction, this is a superposition of states with different
mass density expectation values.    
As noted above, if this state describes the matter degrees of 
freedom, semiclassical gravity implies a gravitational
field obtained from the weighted average of these mass density
distributions.
In contrast, for the proper mixture $E$, semiclassical gravity
implies a gravitational field obtained from one of the mass
density distributions.  Since the distributions are distinct,
semiclassical gravity thus makes distinct predictions for the proper
and improper mixtures.   This gives a one-shot violation of the 
GWMEP.  

\subsection{Nonlinear quantum mechanics and statistical MEP violations}

\label{Nonlinear QM violates the MEP statistically}

We now argue that extensions of quantum mechanics for which the time-evolution operator of the theory is not linear violate the MEP.

We consider theories with a general time evolution law 
defined by operators $T_{\rho}(t,t_0) : \mathscr{D}(\mathcal{H}) \to \mathscr{D}(\mathcal{H})$, so that the state $\rho \equiv \rho(t_0)$ at
time $t_0$ evolves to $\rho(t) = T_{\rho}(t,t_0)[\rho(t_0)] \in \mathscr{D}(\mathcal{H})$, where $t>t_0$.   
We say the theory is nonlinear if there exists $\{ p_i \}, \{ \rho_i \} , t_0, t $
such that the mixed state 
\begin{equation} \rho(t_0) = \sum_{i=1}^N p_i \rho_i(t_0)
\end{equation}
at time $t_0$ 
evolves at time $t$ to 
\begin{equation}
    T_{\rho}(t,t_0)[\rho(t_0) ] \neq \sum_{i=1}^N p_i T_{\rho_i}(t,t_0)[\rho_i(t_0)] \, .  
\end{equation}

\begin{lemma}
    \label{Lemma Nonlinear qm violates the MEP}
    Let 
    \begin{enumerate}
        \item $\{\ket{\psi_i(t_0)}\}_i \in \mathcal{H}$ be an ensemble of pure states with associated probabilities $\{p_i\}_i, \, \sum_i p_i = 1$, where $\mathcal{H}$ is the Hilbert space of the theory, 
        \item $\rho(t_0) = \sum_i p_i \psi_i(t_0) \in \mathscr{D}(\mathcal{H})$ where we wrote the pure state $\psi_i(t_0) = \ket{\psi_i(t_0)}\bra{\psi_i(t_0)} \in \mathscr{D}(\mathcal{H})$,
        \item $T_\rho(t,t_0) : \mathscr{D}(\mathcal{H}) \to \mathscr{D}(\mathcal{H})$ be a time-evolution operator for the state $\rho(t_0)$ such that $\rho(t) = T_\rho(t,t_0)[\rho(t_0)] \in \mathscr{D}(\mathcal{H})$, and likewise we write $\psi_i(t) = T_{\psi_i}(t,t_0)[\psi_i(t_0)]$,
        \item $\mathcal{O} \in \mathcal{L}(\mathcal{H})$ be an observable, i.e. $\mathcal{O}^\dagger = \mathcal{O}$, with $\expval{\mathcal{O}}_{\rho(t)} = \Tr(\mathcal{O}\rho(t))$ i.e. the statistics of the measurements of observables satisfy the Born rule.
    \end{enumerate}
    Then if 
    \begin{equation}
        \label{Condition nonlinear extensions violate the MEP}
        \Tr(\mathcal{O}\rho(t))  \neq \sum_i p_i \Tr\Big(\mathcal{O} \psi_i(t)\Big) \, ,
    \end{equation}
    the MEP is violated \textit{statistically} for a system described by $\rho(t)$ through measurements of $\mathcal{O}$.
\end{lemma}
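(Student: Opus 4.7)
The plan is to identify two operationally distinct expectation values for the observable $\mathcal{O}$, corresponding to the proper-mixture and improper-mixture preparations of the same initial density matrix $\rho(t_0)$, and then to read off the hypothesis as precisely the statement that these two values disagree at time $t$.

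First I would handle the proper mixture. Operationally, preparing the ensemble means that with probability $p_i$ the pure state $\psi_i(t_0)$ is actually realised; since preparation and measurement statistics are independent, the only dynamics that affects a given run is the single-state evolution $T_{\psi_i}$, which produces $\psi_i(t)$. Applying Lemma~\ref{Expval proper mixed states} to the evolved ensemble $\{(p_i,\psi_i(t))\}$ at time $t$ gives the proper-mixture expectation
\begin{equation}
\expval{\mathcal{O}}_{\text{proper}}(t) = \sum_i p_i \Tr\!\bigl(\mathcal{O}\,\psi_i(t)\bigr).
\end{equation}

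Next I would handle the improper mixture. Here $S$ is a subsystem of a larger system in some purification of $\rho(t_0)$, so $S$ is described by the single density matrix $\rho(t_0)$ which evolves under the theory's density-matrix dynamics to $\rho(t)=T_\rho(t,t_0)[\rho(t_0)]$. The Born-rule assumption (item 4 of the hypothesis) then yields the improper-mixture expectation $\expval{\mathcal{O}}_{\text{improper}}(t)=\Tr(\mathcal{O}\rho(t))$. The stated inequality \eqref{Condition nonlinear extensions violate the MEP} is exactly $\expval{\mathcal{O}}_{\text{improper}}(t)\neq\expval{\mathcal{O}}_{\text{proper}}(t)$, and by definition this is a statistical MEP violation through measurements of $\mathcal{O}$.

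There is no real computational obstacle here; the substantive content is conceptual rather than calculational. The main point to emphasise in the write-up is the operational asymmetry between the two preparations: a proper mixture is a classical ensemble of independently prepared pure states whose evolutions are governed by the individual $T_{\psi_i}$ one realisation at a time, whereas an improper mixture is a single preparation of one density matrix evolving under $T_\rho$. Linearity of the evolution is exactly what forces these two prescriptions to coincide for every observable, so any nonlinearity in the sense defined just above the lemma that is visible through some $\mathcal{O}$ immediately yields a statistical MEP violation.
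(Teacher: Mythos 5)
Your proposal is correct and follows essentially the same route as the paper's own proof: the proper-mixture expectation is obtained by applying Lemma~\ref{Expval proper mixed states} to the evolved ensemble $\{(p_i,\psi_i(t))\}$, the improper-mixture expectation is $\Tr(\mathcal{O}\rho(t))$ via the Born rule applied to the subsystem state evolving under $T_\rho$, and the hypothesised inequality is then read off as a statistical MEP violation. The only cosmetic difference is that the paper exhibits an explicit ancilla purification ($\sum_i \sqrt{p_i}\ket{i}\ket{\psi_i(t_0)}$ up to normalisation) to verify $\rho_{\text{proper}}(t_0)=\rho_{\text{improper}}(t_0)$, whereas you invoke the existence of a purification abstractly; either is fine given the paper's standing assumptions.
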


\begin{proof}
    Any proper mixed state at time $t_0$ can be written as
    \begin{equation}
        \rho_\text{proper}(t_0) = \sum_i p_i \ket{\psi_i(t_0)}\bra{\psi_i(t_0)} \in \mathscr{D}(\mathcal{H})
    \end{equation}
    given an ensemble of pure states $\{\ket{\psi_i}\}_i \in \mathcal{H}$ with associated probabilities $p_i, \, \sum_i p_i = 1$. Likewise, at time $t_0$, preparing the entangled state
    \begin{equation}
        \sum_i p_i \ket{i} \ket{\psi_i(t_0)} \in \tilde{\mathcal{H}} \otimes \mathcal{H}
    \end{equation}
    where the $\{\ket{i}\}_i$ are orthogonal states of an ancilla qudit, one can trace over $\tilde{\mathcal{H}}$ to get the improper mixed state
    \begin{equation}
        \rho_\text{improper}(t_0) = \sum_i p_i \ket{\psi_i(t_0)}\bra{\psi_i(t_0)} \in \mathscr{D}(\mathcal{H})
    \end{equation}
    i.e. $\rho_\text{proper}(t_0) = \rho_\text{improper}(t_0)$. If the Born rule is unchanged, i.e. $\forall t \in \mathcal{I} \subset \mathbb{R}, \, \expval{\mathcal{O}}_{\rho(t)} = \Tr(\mathcal{O}\rho(t))$, then:
    \begin{enumerate}
        \item in the proper case, by lemma \ref{Expval proper mixed states}, we must have that the statistics of the outcomes of any experiment on any observable $\mathcal{O}$ with $\rho_\text{proper}(t)$ must follow that of the statistics of outcomes with the pure states $\ket{\psi_i(t)}$ weighted by the $p_i$'s, i.e.
        \begin{align}\label{oproper}
            \expval{\mathcal{O}}_{\rho_\text{proper}(t)} &= \sum_i p_i \expval{\mathcal{O}}_{\psi_i(t)} \nonumber \\
            &= \sum_i p_i \Tr\Big(\mathcal{O} \psi_i(t)\Big) \, .
        \end{align}
        \item in the improper case, 
        \begin{equation}
            \rho_\text{improper}(t) = T_{\rho}(t,t_0)\Big[\sum_i p_i \psi_i(t_0)\Big] \equiv \rho(t) \, .
        \end{equation}
        If time-evolution is nonlinear, then there exists $\{ p_i \} , \{ \psi_i \}, t_0, t>t_0$ such that this is not equal to $\sum_i p_i \psi_i(t)$. Thus, since the statistics of the measurements of observables satisfy the Born rule,
        \begin{equation}\label{oimproper}
            \expval{\mathcal{O}}_{\rho_\text{improper}(t)} = \Tr(\mathcal{O} \rho(t))  \neq \sum_i p_i \Tr\Big(\mathcal{O} \psi_i(t)\Big)\, . 
        \end{equation}
       
    \end{enumerate}
From \eqref{Condition nonlinear extensions violate the MEP}, \eqref{oproper}, \eqref{oimproper} we have
\begin{equation}
    \expval{\mathcal{O}}_{\rho_\text{improper}(t)} \neq \expval{\mathcal{O}}_{\rho_\text{proper}(t)} \, , 
\end{equation}
i.e.~the MEP is violated for a system described by $\rho(t)$ through measurements of $\mathcal{O}$.
\end{proof}

\begin{theorem}
    \label{Nonlinear qm violates the MEP}
    Let 
    \begin{enumerate}
        \item $\{\ket{\psi_i(t_0)}\}_i \in \mathcal{H}$ be an ensemble of pure states with associated probabilities $\{p_i\}_i, \, \sum_i p_i = 1$, where $\mathcal{H}$ is the Hilbert space of the theory, 
        \item $\rho(t_0) = \sum_i p_i \psi_i(t_0) \in \mathscr{D}(\mathcal{H})$ where we wrote the pure state $\psi_i(t_0) = \ket{\psi_i(t_0)}\bra{\psi_i(t_0)} \in \mathscr{D}(\mathcal{H})$,
        \item $T_\rho(t,t_0) : \mathscr{D}(\mathcal{H}) \to \mathscr{D}(\mathcal{H})$ be a time-evolution operator for the state $\rho(t_0)$ such that $\rho(t) = T_\rho(t,t_0)[\rho(t_0)] \in \mathscr{D}(\mathcal{H})$, and likewise we write $\psi_i(t) = T_{\psi_i}(t,t_0)[\psi_i(t_0)]$.
    \end{enumerate}
    Then if
    \begin{equation}
        \rho(t) \neq \sum_i p_i \psi_i(t) \, ,
    \end{equation}
    the MEP is violated for a system described by $\rho(t)$.
\end{theorem}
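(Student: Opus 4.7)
The plan is to reduce Theorem~\ref{Nonlinear qm violates the MEP} to Lemma~\ref{Lemma Nonlinear qm violates the MEP}. That lemma already delivers a statistical MEP violation whenever there exists an observable $\mathcal{O}$ satisfying
\begin{equation}
\Tr(\mathcal{O}\rho(t)) \neq \sum_i p_i \Tr(\mathcal{O}\psi_i(t)),
\end{equation}
while the hypothesis of the theorem is the ostensibly weaker operator-level statement $\rho(t) \neq \sum_i p_i \psi_i(t)$. Hence the whole task is to produce one observable that separates these two operators via the trace pairing.

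The natural construction is to take the witness to be (a bounded function of) the difference itself. First I would set $\Delta = \rho(t) - \sum_i p_i \psi_i(t)$. Since $\rho(t)$ and each $\psi_i(t)$ are density operators, $\Delta$ is self-adjoint, trace-class, and by hypothesis nonzero. Choosing $\mathcal{O} = \Delta$ one obtains
\begin{equation}
\Tr(\mathcal{O}\rho(t)) - \sum_i p_i \Tr(\mathcal{O}\psi_i(t)) = \Tr(\Delta^2) > 0,
\end{equation}
the strict positivity being the standard fact that the Hilbert--Schmidt norm is positive on nonzero self-adjoint operators. This observable therefore satisfies condition~\eqref{Condition nonlinear extensions violate the MEP}, and Lemma~\ref{Lemma Nonlinear qm violates the MEP} applies verbatim to conclude that the MEP is violated for $\rho(t)$ through measurements of $\mathcal{O}$.

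There is no substantive obstacle here: the argument is the elementary observation that density operators are uniquely determined by their expectation values on self-adjoint operators, so any operator-level inequality automatically yields an expectation-value inequality for an appropriately chosen observable. The only mild technicality is making sure $\mathcal{O}$ is an admissible observable; in finite dimensions this is immediate, and in infinite dimensions one may instead pick $\mathcal{O}$ to be a bounded spectral projector of $\Delta$ onto a subspace on which it has a nonzero eigenvalue, which preserves both boundedness and the separation of expectation values.
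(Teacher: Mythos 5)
Your proposal is correct and follows essentially the same route as the paper: reduce the theorem to Lemma~\ref{Lemma Nonlinear qm violates the MEP} by exhibiting a self-adjoint witness for $\rho(t)-\sum_i p_i\psi_i(t)\neq 0$ (the paper picks a projection $P$ with $\Tr[P\rho(t)]\neq\Tr[P\chi(t)]$, while you take the difference operator itself, giving the explicit gap $\Tr(\Delta^2)>0$, and fall back on a spectral projector in infinite dimensions). The two witnesses are interchangeable and the rest of the argument is identical.
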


\begin{proof}
    We write $\chi(t) := \sum_i p_i \psi_i(t)$. If $\rho(t) \neq \chi(t)$ then $\rho(t) - \chi(t) \neq 0$ so there exists a projection operator $P$ such that $\Tr[P\rho(t)] \neq \Tr[P \chi(t)]$ which probabilistically distinguishes both through measurements. Thus, from lemma \ref{Lemma Nonlinear qm violates the MEP}, the MEP is violated statistically through measurements of $P$.
\end{proof}

We extend this result to generalised probabilistic theories (GPTs) with nonlinear dynamics in Appendix \ref{GPTs with nonlinear dynamics violate the MEP}.

In particular, if the time-evolution of states is nonlinear, then the MEP will certainly be violated through some measurement, but may or may not be violated for a specific observable $\mathcal{O}$.  Furthermore, nonlinear time evolution is not sufficient for the violation of the MEP through measurements of \textit{arbitrary} operators: it may be that for some $\mathcal{O}$, $\Tr(\mathcal{O} \, \cdot)$ remains equal for proper and improper mixed states as is the case for the gravitational field in semiclassical gravity, which we look at below. 

Further note that nonlinear time evolution is not a necessary condition for MEP violation: modifications of the Born rule with linear dynamics can also violate the MEP (as we discuss in section \ref{Modifications of the Born rule violate the MEP section}).

\subsection{Semiclassical gravity and statistical GMEP violations}

Consider the case of semiclassical gravity and the thought-experiment of section \ref{Semiclassical gravity section}. If $\mathcal{O}$ is the gravitational field $\Phi$, we see that repeating the experiment a large number of times -- taking the time average and assuming that the matter from each experiment is ``cleared away" before the next -- yields no statistical (G)MEP violations as
\begin{multline}
    \expval{\Phi(\mathbf{y})}_\text{(proper)} = \abs{a_1}^2 . \frac{-Gm}{\abs{\mathbf{x}_1-\mathbf{y}}} + \abs{a_2}^2 . \frac{-Gm}{\abs{\mathbf{x}_2-\mathbf{y}}} \\ = \expval{\Phi(\mathbf{y})}_\text{(improper)} \, ,
\end{multline}
although, as noted above, a single experiment does violate the GMEP. Thus, statistical (G)MEP violations are inequivalent to one-shot (G)MEP violations for a given observable and experiment. 

However, since semiclassical gravity is a nonlinear theory, the arguments
of the previous section imply that it should produce statistical violations of the GMEP (as noted, though not with this terminology in \cite{tilloy_binding_2018}). 
We illustrate this with a simple example.
Consider now a mass interferometer built from a needle-shaped potential, as shown in Figure \ref{fig:mass interferometer}. 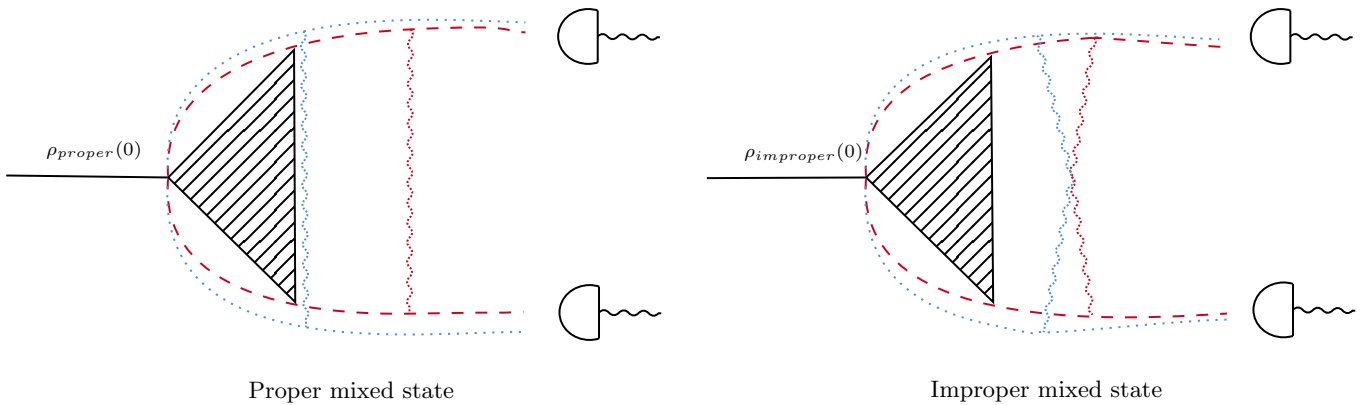
\begin{figure*}
    \centering
    \begin{tikzpicture}[x=0.75pt,y=0.75pt,yscale=-1,xscale=1]

\draw    (9.67,140) -- (91.33,141) ;
\draw    (363,141.33) -- (443.33,141) ;
\draw [color={rgb, 255:red, 208; green, 2; blue, 27 }  ,draw opacity=1 ] [dash pattern={on 4.5pt off 4.5pt}]  (91.33,141) .. controls (86.72,86.51) and (156.5,68.99) .. (213.5,66.25) .. controls (214.35,66.21) and (215.19,66.17) .. (216.02,66.13) .. controls (270.14,63.63) and (258.58,67.35) .. (271.5,67.75) ;
\draw [color={rgb, 255:red, 208; green, 2; blue, 27 }  ,draw opacity=1 ] [dash pattern={on 4.5pt off 4.5pt}]  (91.33,141) .. controls (86.09,198.17) and (156,210.75) .. (213,209.75) .. controls (270,208.75) and (255.66,209.7) .. (270.5,208.75) ;
\draw [color={rgb, 255:red, 208; green, 2; blue, 27 }  ,draw opacity=1 ] [dash pattern={on 4.5pt off 4.5pt}]  (443.33,141) .. controls (441.45,118.79) and (451.79,103.38) .. (468.56,92.74) .. controls (485.33,82.09) and (517.98,74.06) .. (546.58,71.46) .. controls (575.18,68.87) and (554.38,70.82) .. (558.23,70.58) .. controls (562.08,70.34) and (612.28,74.81) .. (626.5,75.25) ;
\draw [color={rgb, 255:red, 208; green, 2; blue, 27 }  ,draw opacity=1 ] [dash pattern={on 4.5pt off 4.5pt}]  (443.33,141) .. controls (438.62,192.38) and (493.59,207.46) .. (547.3,210.79) .. controls (577.21,212.65) and (606.73,210.86) .. (625.33,209.67) ;
\draw  [draw opacity=0] (307.95,82.97) .. controls (306.44,83.42) and (304.84,83.67) .. (303.17,83.67) .. controls (294.79,83.67) and (288,77.47) .. (288,69.83) .. controls (288,62.19) and (294.79,56) .. (303.17,56) .. controls (304.84,56) and (306.44,56.25) .. (307.95,56.7) -- (303.17,69.83) -- cycle ; \draw   (307.95,82.97) .. controls (306.44,83.42) and (304.84,83.67) .. (303.17,83.67) .. controls (294.79,83.67) and (288,77.47) .. (288,69.83) .. controls (288,62.19) and (294.79,56) .. (303.17,56) .. controls (304.84,56) and (306.44,56.25) .. (307.95,56.7) ;  
\draw    (307.95,56.7) -- (307.95,82.97) ;
\draw  [draw opacity=0] (657.28,82.97) .. controls (655.78,83.42) and (654.17,83.67) .. (652.5,83.67) .. controls (644.12,83.67) and (637.33,77.47) .. (637.33,69.83) .. controls (637.33,62.19) and (644.12,56) .. (652.5,56) .. controls (654.17,56) and (655.78,56.25) .. (657.28,56.7) -- (652.5,69.83) -- cycle ; \draw   (657.28,82.97) .. controls (655.78,83.42) and (654.17,83.67) .. (652.5,83.67) .. controls (644.12,83.67) and (637.33,77.47) .. (637.33,69.83) .. controls (637.33,62.19) and (644.12,56) .. (652.5,56) .. controls (654.17,56) and (655.78,56.25) .. (657.28,56.7) ;  
\draw    (657.28,56.7) -- (657.28,82.97) ;
\draw  [draw opacity=0] (308.61,222.3) .. controls (307.11,222.75) and (305.5,223) .. (303.83,223) .. controls (295.46,223) and (288.67,216.81) .. (288.67,209.17) .. controls (288.67,201.53) and (295.46,195.33) .. (303.83,195.33) .. controls (305.5,195.33) and (307.11,195.58) .. (308.61,196.03) -- (303.83,209.17) -- cycle ; \draw   (308.61,222.3) .. controls (307.11,222.75) and (305.5,223) .. (303.83,223) .. controls (295.46,223) and (288.67,216.81) .. (288.67,209.17) .. controls (288.67,201.53) and (295.46,195.33) .. (303.83,195.33) .. controls (305.5,195.33) and (307.11,195.58) .. (308.61,196.03) ;  
\draw    (308.61,196.03) -- (308.61,222.3) ;
\draw  [draw opacity=0] (658.61,220.97) .. controls (657.11,221.42) and (655.5,221.67) .. (653.83,221.67) .. controls (645.46,221.67) and (638.67,215.47) .. (638.67,207.83) .. controls (638.67,200.19) and (645.46,194) .. (653.83,194) .. controls (655.5,194) and (657.11,194.25) .. (658.61,194.7) -- (653.83,207.83) -- cycle ; \draw   (658.61,220.97) .. controls (657.11,221.42) and (655.5,221.67) .. (653.83,221.67) .. controls (645.46,221.67) and (638.67,215.47) .. (638.67,207.83) .. controls (638.67,200.19) and (645.46,194) .. (653.83,194) .. controls (655.5,194) and (657.11,194.25) .. (658.61,194.7) ;  
\draw    (658.61,194.7) -- (658.61,220.97) ;
\draw    (307.95,69.83) .. controls (309.64,68.19) and (311.3,68.21) .. (312.95,69.9) .. controls (314.6,71.59) and (316.26,71.61) .. (317.95,69.97) .. controls (319.64,68.32) and (321.3,68.34) .. (322.95,70.03) .. controls (324.59,71.72) and (326.25,71.74) .. (327.94,70.1) .. controls (329.63,68.46) and (331.29,68.48) .. (332.94,70.17) .. controls (334.59,71.86) and (336.25,71.88) .. (337.94,70.23) -- (339.33,70.25) -- (339.33,70.25) ;
\draw    (308.61,209.17) .. controls (310.3,207.52) and (311.96,207.54) .. (313.61,209.23) .. controls (315.26,210.92) and (316.92,210.94) .. (318.61,209.3) .. controls (320.3,207.66) and (321.96,207.68) .. (323.61,209.37) .. controls (325.26,211.06) and (326.92,211.08) .. (328.61,209.43) .. controls (330.3,207.79) and (331.96,207.81) .. (333.61,209.5) .. controls (335.26,211.19) and (336.92,211.21) .. (338.61,209.56) -- (340,209.58) -- (340,209.58) ;
\draw    (657.28,69.83) .. controls (658.97,68.19) and (660.63,68.21) .. (662.28,69.9) .. controls (663.93,71.59) and (665.59,71.61) .. (667.28,69.97) .. controls (668.97,68.32) and (670.63,68.34) .. (672.28,70.03) .. controls (673.93,71.72) and (675.59,71.74) .. (677.28,70.1) .. controls (678.97,68.46) and (680.63,68.48) .. (682.28,70.17) .. controls (683.93,71.86) and (685.59,71.88) .. (687.28,70.23) -- (688.67,70.25) -- (688.67,70.25) ;
\draw    (658.61,207.83) .. controls (660.3,206.19) and (661.96,206.21) .. (663.61,207.9) .. controls (665.26,209.59) and (666.92,209.61) .. (668.61,207.97) .. controls (670.3,206.32) and (671.96,206.34) .. (673.61,208.03) .. controls (675.26,209.72) and (676.92,209.74) .. (678.61,208.1) .. controls (680.3,206.46) and (681.96,206.48) .. (683.61,208.17) .. controls (685.26,209.86) and (686.92,209.88) .. (688.61,208.23) -- (690,208.25) -- (690,208.25) ;
\draw [color={rgb, 255:red, 74; green, 144; blue, 226 }  ,draw opacity=1 ] [dash pattern={on 0.84pt off 2.51pt}]  (91.33,141) .. controls (86.59,190.48) and (120.1,209.64) .. (160.85,216.61) .. controls (200.25,223.34) and (246.43,218.67) .. (271.33,219) ;
\draw [color={rgb, 255:red, 74; green, 144; blue, 226 }  ,draw opacity=1 ] [dash pattern={on 0.84pt off 2.51pt}]  (91.33,141) .. controls (87.44,96.56) and (120.01,76.26) .. (159.4,67.3) .. controls (175.89,63.54) and (194.38,62.02) .. (211.92,61.56) .. controls (236.26,60.92) and (258.77,62.35) .. (271.5,62.75) ;
\draw [color={rgb, 255:red, 74; green, 144; blue, 226 }  ,draw opacity=1 ] [dash pattern={on 0.84pt off 2.51pt}]  (443.33,141) .. controls (438.92,187) and (467.56,206.79) .. (504.35,214.94) .. controls (541.13,223.1) and (523.13,218.28) .. (532.89,219.09) .. controls (542.65,219.9) and (606.22,212.48) .. (627,212.75) ;
\draw [color={rgb, 255:red, 74; green, 144; blue, 226 }  ,draw opacity=1 ] [dash pattern={on 0.84pt off 2.51pt}]  (443.33,141) .. controls (438.81,89.33) and (482.76,73.37) .. (529.52,69.5) .. controls (565.42,66.52) and (602.97,70.67) .. (621.5,71.25) ;
\draw [color={rgb, 255:red, 74; green, 144; blue, 226 }  ,draw opacity=1 ] [dash pattern={on 0.75pt off 0.75pt}]  (159.4,67.3) .. controls (161.09,68.95) and (161.1,70.61) .. (159.45,72.3) .. controls (157.8,73.98) and (157.82,75.65) .. (159.5,77.3) .. controls (161.18,78.95) and (161.2,80.62) .. (159.55,82.3) .. controls (157.9,83.98) and (157.91,85.65) .. (159.59,87.3) .. controls (161.27,88.95) and (161.29,90.62) .. (159.64,92.3) .. controls (157.99,93.98) and (158.01,95.65) .. (159.69,97.3) .. controls (161.37,98.95) and (161.39,100.62) .. (159.74,102.3) .. controls (158.09,103.98) and (158.11,105.65) .. (159.79,107.3) .. controls (161.47,108.95) and (161.49,110.62) .. (159.84,112.3) .. controls (158.19,113.98) and (158.21,115.65) .. (159.89,117.3) .. controls (161.57,118.95) and (161.58,120.62) .. (159.93,122.3) .. controls (158.28,123.98) and (158.3,125.65) .. (159.98,127.3) .. controls (161.66,128.95) and (161.68,130.61) .. (160.03,132.29) .. controls (158.38,133.97) and (158.4,135.64) .. (160.08,137.29) .. controls (161.76,138.94) and (161.78,140.61) .. (160.13,142.29) .. controls (158.48,143.97) and (158.5,145.64) .. (160.18,147.29) .. controls (161.86,148.94) and (161.87,150.61) .. (160.22,152.29) .. controls (158.57,153.97) and (158.59,155.64) .. (160.27,157.29) .. controls (161.95,158.94) and (161.97,160.61) .. (160.32,162.29) .. controls (158.67,163.97) and (158.69,165.64) .. (160.37,167.29) .. controls (162.05,168.94) and (162.07,170.61) .. (160.42,172.29) .. controls (158.77,173.97) and (158.79,175.64) .. (160.47,177.29) .. controls (162.15,178.94) and (162.17,180.61) .. (160.52,182.29) .. controls (158.87,183.97) and (158.88,185.64) .. (160.56,187.29) .. controls (162.24,188.94) and (162.26,190.61) .. (160.61,192.29) .. controls (158.96,193.97) and (158.98,195.64) .. (160.66,197.29) .. controls (162.34,198.94) and (162.36,200.61) .. (160.71,202.29) .. controls (159.06,203.97) and (159.08,205.64) .. (160.76,207.29) .. controls (162.44,208.94) and (162.46,210.61) .. (160.81,212.29) -- (160.85,216.61) -- (160.85,216.61) ;
\draw [color={rgb, 255:red, 208; green, 2; blue, 27 }  ,draw opacity=1 ] [dash pattern={on 0.75pt off 0.75pt}]  (213.5,66.25) .. controls (215.16,67.92) and (215.15,69.59) .. (213.48,71.25) .. controls (211.81,72.92) and (211.81,74.58) .. (213.47,76.25) .. controls (215.13,77.92) and (215.12,79.59) .. (213.45,81.25) .. controls (211.78,82.91) and (211.77,84.58) .. (213.43,86.25) .. controls (215.09,87.92) and (215.08,89.59) .. (213.41,91.25) .. controls (211.74,92.92) and (211.74,94.58) .. (213.4,96.25) .. controls (215.06,97.92) and (215.05,99.59) .. (213.38,101.25) .. controls (211.71,102.91) and (211.7,104.58) .. (213.36,106.25) .. controls (215.02,107.92) and (215.01,109.59) .. (213.34,111.25) .. controls (211.67,112.92) and (211.67,114.58) .. (213.33,116.25) .. controls (214.99,117.92) and (214.98,119.59) .. (213.31,121.25) .. controls (211.64,122.91) and (211.63,124.58) .. (213.29,126.25) .. controls (214.95,127.92) and (214.94,129.59) .. (213.27,131.25) .. controls (211.6,132.92) and (211.6,134.58) .. (213.26,136.25) .. controls (214.92,137.92) and (214.91,139.59) .. (213.24,141.25) .. controls (211.57,142.91) and (211.56,144.58) .. (213.22,146.25) .. controls (214.88,147.92) and (214.87,149.59) .. (213.2,151.25) .. controls (211.53,152.92) and (211.53,154.58) .. (213.19,156.25) .. controls (214.85,157.92) and (214.84,159.59) .. (213.17,161.25) .. controls (211.5,162.91) and (211.49,164.58) .. (213.15,166.25) .. controls (214.81,167.92) and (214.8,169.59) .. (213.13,171.25) .. controls (211.46,172.92) and (211.46,174.58) .. (213.12,176.25) .. controls (214.78,177.92) and (214.77,179.59) .. (213.1,181.25) .. controls (211.43,182.91) and (211.42,184.58) .. (213.08,186.25) .. controls (214.74,187.92) and (214.73,189.59) .. (213.06,191.25) .. controls (211.39,192.92) and (211.39,194.58) .. (213.05,196.25) .. controls (214.71,197.92) and (214.7,199.59) .. (213.03,201.25) .. controls (211.36,202.91) and (211.35,204.58) .. (213.01,206.25) -- (213,209.75) -- (213,209.75) ;
\draw [color={rgb, 255:red, 74; green, 144; blue, 226 }  ,draw opacity=1 ] [dash pattern={on 0.75pt off 0.75pt}]  (547,140.75) .. controls (548.35,142.69) and (548.05,144.33) .. (546.11,145.67) .. controls (544.18,147.02) and (543.88,148.66) .. (545.23,150.59) .. controls (546.58,152.53) and (546.28,154.17) .. (544.34,155.51) .. controls (542.4,156.85) and (542.1,158.49) .. (543.45,160.43) .. controls (544.8,162.36) and (544.5,164) .. (542.57,165.35) .. controls (540.63,166.69) and (540.33,168.33) .. (541.68,170.27) .. controls (543.03,172.2) and (542.73,173.85) .. (540.8,175.2) .. controls (538.86,176.54) and (538.56,178.18) .. (539.91,180.12) .. controls (541.26,182.06) and (540.96,183.7) .. (539.02,185.04) .. controls (537.09,186.39) and (536.79,188.03) .. (538.14,189.96) .. controls (539.49,191.9) and (539.19,193.54) .. (537.25,194.88) .. controls (535.31,196.22) and (535.01,197.86) .. (536.36,199.8) .. controls (537.71,201.73) and (537.41,203.37) .. (535.48,204.72) .. controls (533.54,206.06) and (533.24,207.7) .. (534.59,209.64) .. controls (535.94,211.58) and (535.64,213.22) .. (533.7,214.56) -- (532.89,219.09) -- (532.89,219.09) ;
\draw [color={rgb, 255:red, 208; green, 2; blue, 27 }  ,draw opacity=1 ] [dash pattern={on 0.75pt off 0.75pt}]  (547,140.75) .. controls (548.89,142.16) and (549.14,143.81) .. (547.73,145.7) .. controls (546.33,147.59) and (546.58,149.24) .. (548.47,150.64) .. controls (550.36,152.05) and (550.61,153.7) .. (549.2,155.59) .. controls (547.79,157.48) and (548.04,159.13) .. (549.93,160.53) .. controls (551.82,161.94) and (552.07,163.59) .. (550.66,165.48) .. controls (549.26,167.37) and (549.51,169.02) .. (551.4,170.43) .. controls (553.29,171.83) and (553.54,173.48) .. (552.13,175.37) .. controls (550.72,177.26) and (550.97,178.91) .. (552.86,180.32) .. controls (554.75,181.72) and (555,183.37) .. (553.59,185.26) .. controls (552.19,187.15) and (552.44,188.8) .. (554.33,190.21) .. controls (556.22,191.62) and (556.47,193.27) .. (555.06,195.16) .. controls (553.65,197.05) and (553.9,198.7) .. (555.79,200.1) .. controls (557.68,201.51) and (557.93,203.16) .. (556.52,205.05) .. controls (555.12,206.94) and (555.37,208.59) .. (557.26,209.99) -- (557.3,210.29) -- (557.3,210.29) ;
\draw [color={rgb, 255:red, 208; green, 2; blue, 27 }  ,draw opacity=1 ] [dash pattern={on 0.75pt off 0.75pt}]  (558.23,70.58) .. controls (559.61,72.49) and (559.35,74.13) .. (557.44,75.52) .. controls (555.53,76.9) and (555.27,78.54) .. (556.65,80.45) .. controls (558.03,82.36) and (557.77,84) .. (555.86,85.39) .. controls (553.95,86.78) and (553.69,88.42) .. (555.07,90.33) .. controls (556.45,92.24) and (556.19,93.88) .. (554.28,95.26) .. controls (552.37,96.65) and (552.11,98.29) .. (553.49,100.2) .. controls (554.87,102.11) and (554.61,103.75) .. (552.7,105.14) .. controls (550.79,106.53) and (550.53,108.17) .. (551.91,110.08) .. controls (553.29,111.99) and (553.03,113.63) .. (551.12,115.01) .. controls (549.21,116.4) and (548.95,118.04) .. (550.33,119.95) .. controls (551.71,121.86) and (551.45,123.5) .. (549.54,124.89) .. controls (547.63,126.27) and (547.37,127.91) .. (548.75,129.82) .. controls (550.13,131.73) and (549.87,133.37) .. (547.96,134.76) .. controls (546.05,136.15) and (545.79,137.79) .. (547.17,139.7) -- (547,140.75) -- (547,140.75) ;
\draw [color={rgb, 255:red, 74; green, 144; blue, 226 }  ,draw opacity=1 ] [dash pattern={on 0.75pt off 0.75pt}]  (529.52,69.5) .. controls (531.53,70.73) and (531.93,72.35) .. (530.71,74.36) .. controls (529.49,76.37) and (529.89,77.99) .. (531.9,79.21) .. controls (533.91,80.44) and (534.31,82.06) .. (533.09,84.07) .. controls (531.87,86.09) and (532.27,87.71) .. (534.29,88.92) .. controls (536.3,90.15) and (536.7,91.77) .. (535.48,93.78) .. controls (534.26,95.79) and (534.66,97.41) .. (536.67,98.64) .. controls (538.68,99.86) and (539.08,101.48) .. (537.86,103.49) .. controls (536.64,105.5) and (537.04,107.12) .. (539.05,108.35) .. controls (541.06,109.57) and (541.46,111.19) .. (540.24,113.2) .. controls (539.02,115.21) and (539.42,116.83) .. (541.43,118.06) .. controls (543.44,119.29) and (543.84,120.91) .. (542.62,122.92) .. controls (541.4,124.94) and (541.8,126.56) .. (543.82,127.77) .. controls (545.83,129) and (546.23,130.62) .. (545.01,132.63) .. controls (543.79,134.64) and (544.19,136.26) .. (546.2,137.48) -- (547,140.75) -- (547,140.75) ;
\draw [pattern=_phbkm4h6k,pattern size=6pt,pattern thickness=0.75pt,pattern radius=0pt, pattern color={rgb, 255:red, 0; green, 0; blue, 0}]   (91.33,141) -- (155.4,203.9) -- (154.8,76.65) -- cycle ;
\draw [pattern=_7uoi8qg3w,pattern size=6pt,pattern thickness=0.75pt,pattern radius=0pt, pattern color={rgb, 255:red, 0; green, 0; blue, 0}]   (443.33,141) -- (507.4,203.9) -- (506.3,80.15) -- cycle ;

\draw (130.33,242.67) node [anchor=north west][inner sep=0.75pt]  [xscale=1,yscale=1] [align=left] {Proper mixed state};
\draw (474.33,242) node [anchor=north west][inner sep=0.75pt]  [xscale=1,yscale=1] [align=left] {Improper mixed state};
\draw (28.33,121.07) node [anchor=north west][inner sep=0.75pt]  [font=\scriptsize,xscale=1,yscale=1]  {$\rho _{proper}( 0)$};
\draw (380.33,122.4) node [anchor=north west][inner sep=0.75pt]  [font=\scriptsize,xscale=1,yscale=1]  {$\rho _{improper}( 0)$};
\end{tikzpicture}
    \caption{Mass interferometer with a needle-shaped potential (hashed area) splitting the wavefunctions into two branches. In the case of a proper mixed state, the gravitational interactions between the two branches (represented by curly lines) happen within a single pure state (e.g. blue or red) at every run of the experiment. In the case of an improper mixed state, the gravitational interactions between the two branches happen across the whole mixture.}
    \label{fig:mass interferometer}
\end{figure*} 
This mass interferometer could also be constructed using a beam splitter or a Stern-Gerlach apparatus, but internal degrees of freedom such as polarisation or spin are not required for the following argument.

We work in the nonrelativistic limit of semiclassical gravity \eqref{Semiclassical Field Equations}. In this regime, the gravitational field satisfies Poisson's equation \eqref{Semiclassical Newtonian potential}. Assuming a particle has mass $m$, the Schrödinger equation gets an extra term from the gravitational Hamiltonian and becomes the Schrödinger-Newton equation in the position basis \cite{Jones1995}
\begin{equation}
    \label{Schrodinger-Newton equation}
    i\hbar \frac{\partial \psi}{\partial t}(\mathbf{x},t)  = \Big(-\frac{\hbar^2}{2m}\nabla^2 + V(\mathbf{x}) + f(\abs{\psi}^2;\mathbf{x},t)\Big)\psi(\mathbf{x},t)
\end{equation}
where we assumed a time-independent potential $V(\mathbf{x})$, and
\begin{align}
    f(\abs{\psi}^2;\mathbf{x},t) &:= \int \hat{\rho}(\mathbf{x}) \Phi(\abs{\psi}^2;\mathbf{x}) d^3 \mathbf{x} \\
    &= - G m^2 \int \frac{\abs{\psi(\mathbf{y},t)}^2}{\abs{\mathbf{y}-\mathbf{x}}} d^3\mathbf{y}
\end{align}
is the nonlocal and nonlinear term introduced by semiclassical gravitational interactions. Consider two quantum states of the form
\begin{align}
    \ket{\psi_1(t)} &= a_{1,U} \ket{\psi_{1,U}(t)} + a_{1,D} \ket{\psi_{1,D}(t)} \\
    \ket{\psi_2(t)} &= a_{2,U} \ket{\psi_{2,U}(t)} + a_{2,D} \ket{\psi_{2,D}(t)}
\end{align}
where $a_{i,\alpha} \in \mathbb{C}, i = 1,2$ and $\alpha = U,D$, where $U$ and $D$ correspond to ``up" and ``down" branches of the wavefunctions, respectively, which are assumed to have orthogonal support. Given $p_1, p_2 \in [0,1]$ with $p_1+p_2 = 1$, we can then construct the proper mixed state at time $t=0$:
\begin{equation}
    \rho_\text{proper}(0) := p_1 \ket{\psi_1(0)}\bra{\psi_1(0)} + p_2 \ket{\psi_2(0)}\bra{\psi_2(0)}
\end{equation}
or, in the position basis,
\begin{align}
    \rho_\text{proper}(\mathbf{x},\mathbf{y};0) &:= \bra{\mathbf{x}}\rho_\text{proper}(0)\ket{\mathbf{y}} \\
    &= p_1 \psi_1(\mathbf{x},0)\psi_1^*(\mathbf{y},0) \nonumber \\ &+ p_2 \psi_2(\mathbf{x},0)\psi_2^*(\mathbf{y},0) \, .
\end{align}
Likewise, one can construct an improper mixed state at time $t=0$ by preparing an entangled state
\begin{equation}
    \sqrt{p_1} \ket{0} \ket{\psi_1(0)} + \sqrt{p_2} \ket{1} \ket{\psi_2(0)}
\end{equation}
where $\ket{0},\ket{1}$ are orthogonal states of an ancilla qubit, and tracing out the ancilla Hilbert space to get
\begin{equation}
    \rho_\text{improper}(0) := p_1 \ket{\psi_1(0)}\bra{\psi_1(0)} + p_2 \ket{\psi_2(0)}\bra{\psi_2(0)}
\end{equation}
with $\rho_\text{improper}(0) = \rho_\text{proper}(0)$, and likewise in the position basis $\rho_\text{improper}(\mathbf{x},\mathbf{y};0) = \rho_\text{proper}(\mathbf{x},\mathbf{y};0)$.

The time evolution for the proper mixed state will follow those of the pure states \eqref{Schrodinger-Newton equation} weighted by the probabilities, so that
\begin{equation}
    \rho_\text{proper}(\mathbf{x},\mathbf{y};t) = p_1 \psi_1(\mathbf{x},t)\psi_1^*(\mathbf{y},t) + p_2 \psi_2(\mathbf{x},t)\psi_2^*(\mathbf{y},t) 
\end{equation}
such that the statistics of position measurements at spacetime position $(\mathbf{x},t)$ follow, by lemma \ref{Expval proper mixed states}, 
\begin{align}
    \hat{\expval{X}}_{\rho_\text{proper}(\mathbf{x},\mathbf{x};t)} &= p_1 \hat{\expval{X}}_{\psi_1(\mathbf{x},t)} + p_2 \hat{\expval{X}}_{\psi_2(\mathbf{x},t)} \, .
\end{align}
Once again, we stress that this just follows the weighted statistics of position measurements on pure states. On the other hand, in the improper case, the nonlinear term $f(\rho_\text{improper};\mathbf{x},t)$ will typically couple the time evolution the two branches of the mixture with
\begin{equation}
    f(\rho_\text{improper};\mathbf{x},t) = - G m^2 \int \frac{\rho_\text{improper}(\mathbf{y},\mathbf{y};t)}{\abs{\mathbf{y}-\mathbf{x}}} d^3\mathbf{y} \, , 
\end{equation}
where in particular
\begin{multline}
    f(\rho_\text{improper};\mathbf{x},0) \\ = - G m^2 \int \frac{p_1\abs{\psi_{1}(\mathbf{y},0)}^2+ p_2 \abs{\psi_{2}(\mathbf{y},0)}^2}{\abs{\mathbf{y}-\mathbf{x}}} d^3\mathbf{y}
\end{multline}
rather than just depending on either $\psi_1(\mathbf{x},0)$ or $\psi_2(\mathbf{x},0)$ respectively, as is the case for proper mixed states. Thus, in general,
\begin{align}
    \rho_\text{improper}(\mathbf{x},\mathbf{y};t) &\neq p_1 \psi_1(\mathbf{x},t)\psi_1^*(\mathbf{y},t) + p_2 \psi_2(\mathbf{x},t)\psi_2^*(\mathbf{y},t) 
\end{align}
for generic initial states at $t > 0$. Thus 
\begin{equation}
    \hat{\expval{X}}_{\rho_\text{improper}(\mathbf{x},\mathbf{x};t)} \neq p_1 \hat{\expval{X}}_{\psi_1(\mathbf{x},t)} + p_2 \hat{\expval{X}}_{\psi_2(\mathbf{x},t)}
\end{equation}
i.e. 
\begin{equation}
\hat{\expval{X}}_{\rho_\text{improper}(\mathbf{x},\mathbf{x};t)} \neq \hat{\expval{X}}_{\rho_\text{proper}(\mathbf{x},\mathbf{x};t)} \, . 
\end{equation}
Hence, semiclassical gravity violates the GMEP both one-shot -- with measurements of the gravitational field in the experiment of section \ref{Semiclassical gravity section} -- and statistically -- with position measurements in a mass interferometer. 

We leave as an open question whether semiclassical gravity necessarily statistically violates the GWMEP.

\subsection{Modifications of the Born rule violate the MEP}

\label{Modifications of the Born rule violate the MEP section}

Consequences of modifying the Born rule have previously been
studied by Aaronson \cite{Aaronson2004}, who noted that $p$-norm
generalisations imply (inter alia) the distinguishability of 
non-orthogonal states, and 
by Galley and Masanes \cite{Galley2018}, who showed
that modifying the Born rule violates the PP.   
Their result implies the weaker result of this section, 
which we nonetheless include since it is simple to 
state and show in our restricted context, and makes more
complete our discussion of MEP violations in this context.  

We write $\expval{\expval{ O }}$ for the expectation value
of an observable $O$ defined via a modified Born rule. Let $\{\ket{\psi_i}\}_i \in \mathcal{H}$ be an ensemble of pure states with associated probabilities $\{p_i\}_i, \, \sum_i p_i = 1$ and (pure) density operator $\psi_i = \ket{\psi_i}\bra{\psi_i} \in \mathscr{D}(\mathcal{H})$, where $\mathcal{H}$ is the Hilbert space of the theory, and $\mathcal{O} \in \mathcal{L}(\mathcal{H})$ be an observable, i.e. $\mathcal{O}^\dagger = \mathcal{O}$. Let $\rho_\text{proper} = \sum_i p_i \psi_i$. As previously argued in lemma \ref{Expval proper mixed states},
\begin{equation}
    \expval{\expval{\mathcal{O}}}_{\rho_\text{proper}} = \sum_i p_i \expval{\expval{\mathcal{O}}}_{\psi_i} \, .
\end{equation}
Given an entangled state
\begin{equation}
    \sum_i p_i \ket{i} \ket{\psi_i}
\end{equation}
where the $\ket{i}$ are orthogonal ancilla qudits, we can trace over these to get $\rho_\text{improper} = \sum_i p_i \psi_i = \rho_\text{proper}$. In that case,
\begin{equation}
    \expval{\expval{\mathcal{O}}}_{\rho_\text{improper}} = \expval{\expval{\mathcal{O}}}_{\sum_i p_i \psi_i}
\end{equation}
so if 
\begin{equation}
    \expval{\expval{\mathcal{O}}}_{\sum_i p_i \psi_i} \neq \sum_i p_i \expval{\expval{\mathcal{O}}}_{\psi_i}
\end{equation}
then $\expval{\expval{\mathcal{O}}}_{\rho_\text{improper}} \neq \expval{\expval{\mathcal{O}}}_{\rho_\text{proper}}$, i.e. nonlinear modifications to the Born rule violate the MEP statistically for a system described by $\rho$ through measurements of $\mathcal{O}$\footnote{One then expects de Broglie-Bohm theory to violate the MEP in quantum non-equilibrium \cite{Valentini1992,Valentini2005}. Indeed, initial conditions give different probability distributions for proper and improper mixed states. Whether these are experimentally measurable may depend on the initial state, the details of the dynamics and how fast one reaches equilibrium.}. 

Note that even affine modifications of the Born rule, of the form
\begin{equation}
    \expval{\expval{\mathcal{O}}}_\rho = \frac{k\Tr(\mathcal{O}\rho)+c}{\sum_{i=1}^N (k\Tr(\Pi_i \rho) + c)} \, ,
\end{equation}
violate the MEP, for $N$ measurement outcomes with projectors $\Pi_i$ and constants $k,c \in \mathbb{R}^*$ where the denominator provides probability normalisation. Indeed, for $\rho = \sum_j p_j \psi_j$ where the $\psi_j$ are pure, 
\begin{multline}
    \expval{\expval{\mathcal{O}}}_{\sum_j p_j \psi_j} = \frac{k\sum_j p_j \Tr(\mathcal{O}\psi_j)+c}{\sum_{i=1}^N (k \sum_j p_j \Tr(\Pi_i \psi_j) + c)} \\
    \neq \sum_j p_j \frac{k\Tr(\mathcal{O}\psi_j)+c}{\sum_{i=1}^N (k\Tr(\Pi_i \psi_j) + c)}
\end{multline}
in general. Nonlinear modifications of the Born rule thus generically violate the MEP - this result is independent of the dynamics, which may or may not be linear.

Note that the converse is not necessarily true, as we have shown in the case of semiclassical gravity which still follows the Born rule. 
We emphasize again that theorem \ref{Nonlinear qm violates the MEP} as well as the result above do not imply the equivalence between violations of the MEP and signalling issues arising from nonlinearities in the dynamics or in the computation of probabilities. There exist nonlinear extensions of the dynamics \cite{Kent2005} and of the Born rule \cite{Helou2017,Galley2018} that do not allow superluminal signalling.

\section{Black Holes and Hawking Radiation}

Semiclassical gravity can also explicitly violate the GMEP by distinguishing proper and improper mixtures of energy eigenstates. This has significant implications for the treatment of thermal ensembles. On the one hand, we may treat canonical ensembles as statistical, taking the thermal state of a system to be in one of the energy levels $\{E_1,E_2,...,E_N\}$ with some associated probability $\{p_1,p_2,...,p_N\}=\{\exp(-\frac{E_1}{k_\text{B} T}),\exp(-\frac{E_2}{k_\text{B} T}),...,\exp(-\frac{E_N}{k_\text{B} T})\}$ originating from classical uncertainty, so that the resulting Gibbs state is a proper mixed state
\begin{align}
    \rho_G^{\text{proper}} &= \frac{1}{Z} \sum_{k=1}^N p_k \ket{E_k}\bra{E_k} \\ &= \frac{1}{Z} \sum_{k=1}^N \exp(-\frac{E_k}{k_\text{B} T}) \ket{E_k}\bra{E_k} \, , \label{proper Gibbs}
\end{align}
where $Z = \sum_k \exp(-\frac{E_k}{k_\text{B} T})$ is the partition function serving as probability normalisation $\Tr(\rho)=1$. 

On the other hand, we may also derive the Gibbs state by enlarging the Hilbert space $\mathcal{H}_1$ to $\mathcal{H}_1\otimes\mathcal{H}_2$, 
where $\dim( \mathcal{H}_2 ) \geq \dim ( \mathcal{H}_1 )$, considering the so-called thermofield double state \cite{Israel1976}
\begin{equation}
    \ket{\psi} = \frac{1}{\sqrt{Z}} \sum_{k=1}^N \exp(-\frac{E_k}{2 k_\text{B} T}) \ket{E_k}_1 \otimes \ket{E_k}_2 \, , 
\end{equation}
and taking the partial trace with respect to $\mathcal{H}_2$, which yields
\begin{align}
    \rho_G^{\text{improper}} &= \Tr_{\mathcal{H}_2}(\ket{\psi}\bra{\psi}) \\ 
    &= \frac{1}{Z} \sum_{k=1}^N \exp(-\frac{E_k}{k_\text{B} T})  \ket{E_k}\bra{E_k} \, . \label{improper Gibbs}
\end{align}
We see that equations \eqref{proper Gibbs} and \eqref{improper Gibbs} are equal: they both describe a Gibbs state, but are ontologically different, describing respectively a proper and an improper mixed state. 

In conventional quantum theory these are not experimentally distinguishable, but in the context of semiclassical gravity, they generally are.
As in the thought experiment of Figure \ref{fig:MEP violation}, 
semiclassical gravity gives a gravitational field corresponding to  a single energy eigenstate $\ket{E_k}\bra{E_k}$ for some fixed $k$ from the proper mixed state \eqref{proper Gibbs}, but a gravitational field determined by the full improper mixed state \eqref{improper Gibbs} from the thermofield double state. These are experimentally distinguishable, and this is relevant in the context of black hole physics. 

\subsection{Hawking radiation is improper}

We consider a Schwarzschild spacetime. Let the black hole region be $\mathcal{B} = \{(t,r,\theta,\phi) | 0 < r < 2GM/c^2\}$ and the black hole exterior $\mathcal{M} = \{(t,r,\theta,\phi) | r > 2GM/c^2\}$. In practice, the Hawking spectrum \eqref{Hawking spectrum} can be derived in many different ways \cite{Hawking1975,Parker1975,Wald1975,Hartle1976,Israel1976,Unruh1976,Wald1984}, though we here start by outlining two (Lorentzian) pictures which can equivalently be adopted for the following discussion:
\begin{enumerate}
    \item That of Werner Israel \cite{Israel1976}, depicted in Figure \ref{fig:Penrose}. One may formally analytically extend the spacetime through the Kruskal extension and augment the physical Fock space $\mathcal{F}$ of a hypersurface $\Sigma \subset \mathcal{M}$ to $\mathcal{F}\otimes\mathcal{\tilde{F}}$. Here, $\mathcal{\tilde{F}}$ corresponds to the Fock space of a hypersurface in the hashed region in Figure \ref{fig:Penrose} i.e. $\tilde{\Sigma} \subset \bar{\mathcal{M}}$ where $\bar{\mathcal{M}}$ is the dual region to $\mathcal{M}$ of the Kruskal extension. The thermofield double state on $\Sigma \cup \tilde{\Sigma}$ is then \cite{Israel1976}
\begin{equation}
    \label{Thermofield double state}
    \ket{\psi} = \frac{1}{\sqrt{Z}} \sum_n  \exp(-\frac{E_n}{2k_\text{B} T_\text{H}}) \ket{n}_{\mathcal{F}} \ket{n}_{\mathcal{\tilde{F}}} \,.
\end{equation} This is entangled. The (improper) reduced density matrix of the state in $\Sigma \subset \mathcal{M}$ is just its partial trace with respect to $\mathcal{\tilde{F}}$, which is given by summing over the unknown states in the hashed region, yielding the Gibbs state
\begin{align}
    \rho_{\Sigma} &= \Tr_{\tilde{\mathcal{F}}}(\ket{\psi}\bra{\psi}) \label{density matrix spacetime} \\ &= \frac{1}{Z} \sum_n \exp(-\frac{E_n}{k_\text{B} T_\text{H}}) \ket{n}\bra{n}_\mathcal{F} \label{density matrix proper Hawking} \\
    &= \frac{1}{Z} \exp(-\frac{\hat{H}_{\text{Hawking}}}{k_\text{B} T_\text{H}}) \, , 
\end{align}
where $Z = \Tr(\exp(-\frac{\hat{H}_{\text{Hawking}}}{k_\text{B} T_\text{H}})) = \sum_n \exp(-\frac{E_n}{k_\text{B} T_\text{H}})$ is the partition function serving as probability normalisation $\Tr(\rho)=1$, and 
\begin{equation}
    \label{Hawking Hamiltonian}
    \hat{H}_{\text{Hawking}} = \int_{0}^{+\infty} \hbar \omega \hat{b}^\dagger(\omega) \hat{b}(\omega) d\omega
\end{equation}
is the Hawking Hamiltonian with Hawking temperature $T_\text{H} = \frac{\hbar c^3}{8\pi G M k_\text{B}}$. From this Hamiltonian, one can derive the blackbody spectrum from the expected number of late time ``out" particles with frequency $\omega$
\begin{equation}
    \label{Hawking spectrum}
    \expval{\hat{b}^\dagger(\omega) \hat{b}(\omega)} = \frac{\Gamma(\omega)}{e^{\hbar \omega/k_\text{B} T_\text{H}}-1} \, , 
\end{equation}
which indeed corresponds to Hawking radiation.
    \item That of Hawking and Wald \cite{Hawking1975,Wald1975}, which is depicted in Figure \ref{collapsing}. One here works with a collapsing black hole spacetime, in which case the thermofield double lies on a hypersurface $\Xi$ with two connected components: one at $\mathcal{I}^+$ (on which parochial observers measure what escapes from the black hole) and the other beyond the horizon (on which parochial observers measure what fell into the black hole). Because one does not observe inside the hole, one has to sum over all possibilities for the surface inside the hole and so obtains a density matrix describing a mixed state \cite{Hawking1982b,Hawking1987}. Thus, the resulting (improper) reduced density operator on $\mathcal{F}_{out}$, obtained after tracing out $\mathcal{F}_{int}$ (assuming $\mathcal{F}_{in} \cong \mathcal{F}_{out} \otimes \mathcal{F}_{int}$ where $\mathcal{F}_{in}$ is the Fock space on $\mathcal{I}^-$, $\mathcal{F}_{out}$ that on $\mathcal{I}^+$ and $\mathcal{F}_{int}$ that in $\Xi \cap \mathcal{B}$), is again a Hawking-Gibbs state of the form \eqref{density matrix proper Hawking} obtained by tracing out a thermofield double of the form \eqref{Thermofield double state}.
\end{enumerate}  

\begin{figure}[t!]
    \centering
    \subfloat[\justifying Penrose diagram of a maximally extended Schwarzschild black hole. The hashed region corresponds to regions of the extended spacetime for which we have no information: it is traced over. \label{fig:Penrose}]{
}
    \caption{\justifying Penrose diagrams of black hole spacetimes highlighting two different ways to recover Hawking radiation. }
\end{figure}

Now, as Wald puts it \cite{Wald1975}, ``the density matrix for emission of particles to infinity at late times by spontaneous particle creation resulting from spherical gravitational collapse to a black hole is identical in all aspects to that of black body thermal emission at temperature $k_B T_H = \frac{\hbar \kappa}{2\pi}$", where $\kappa$ is the surface gravity.   One might take this to indicate that one's perspective on the nature of this Gibbs state is purely interpretational. However, in the context of M{\o}ller-Rosenfeld semiclassical gravity, different perspectives on the derivation of Hawking radiation give different experimental predictions. Whilst in a unitary theory this interpretational question is operationally irrelevant, it becomes of importance in the context of nonlinear modifications to quantum theory such as M{\o}ller-Rosenfeld semiclassical gravity. In the improper case, the semiclassical gravitational field backreacts from a weighted average of the energy levels; in the proper case, it backreacts from only one of them.

In versions of quantum theory in which unitary evolution is universal and the initial state is pure, all thermal states are improper mixtures. Let us underline that the most commonly accepted derivations of Hawking radiation -- the real spacetime pictures and the Euclidean picture -- do indeed lead to an improper Hawking-Gibbs state.

The real (Lorentzian) spacetime pictures \cite{Israel1976,Wald1984}, described above, obviously lead to an improper mixture: the \enquote{inaccessible} or black hole Fock spaces are being traced out. The Gibbs state \eqref{density matrix spacetime} is an improper mixture arising from restricting our consideration of the whole system -- the mixed state on $\Sigma$ (or $\mathcal{I}^+$, respectively) can be purified to the thermofield double state \eqref{Thermofield double state} living on the whole of $\Sigma \cup \tilde{\Sigma}$ (respectively, the whole of $\Xi$).

Another commonly considered approach is the Euclidean derivation of Hawking radiation. Here, one Wick rotates the Schwarzschild exterior solution and then deduces the (Hawking) temperature measured by an observer that is more than a few Schwarzschild radii away from the black hole \cite{Almheiri2021} à la Unruh \cite{Unruh1976} and Tolman \cite{Tolman1930} - the derivation is recalled in Appendix \ref{app:Unruh-Tolman}. The resulting vacuum state is an improper thermal mixture. Indeed, one does restrict the observer's consideration to the Rindler wedge associated to the observer's uniformly accelerating trajectory (needed to stay at a fixed radius and avoid falling into the black hole). This restriction is precisely what makes the mixture improper -- restricting the algebra of observables to that wedge makes the pure vacuum \enquote{become} an improper KMS state \cite{Bisognano1975,Bisognano1976,Earman2011} -- an analogue of an improper Gibbs state beyond type I von Neumann algebras.

Another example of the improper nature of this mixture can be seen through the use of Euclidean path integrals \cite{Hartman2015} to compute Hawking radiation - the derivation is recalled in Appendix \ref{app:Euclidean path integral}. In this case one extends the spacetime by sending the imaginary time $t_E$ to $t_E + \pi$ to land in $\bar{\mathcal{M}}$. The path integral on this Euclidean black hole geometry then yields an entangled state of the form \eqref{Thermofield double state} on hypersurfaces living in the extended spacetime $\bar{\mathcal{M}} \cup \mathcal{M}$.  Thus, restricting consideration to $\mathcal{I}^+$ in $\mathcal{M}$, one recovers the (improper) Hartle-Hawking state \eqref{density matrix spacetime}.

Hence, such derivations are still mutually consistent in the context of a nonlinear theory such as M{\o}ller-Rosenfeld semiclassical gravity. Any derivation that leads to a description of Hawking radiation as a proper mixture would be inconsistent with either, however. For example, a \enquote{statistical ensemble} understanding of Hawking radiation, which may seem operationally intuitive, would be inconsistent with the above derivations if we assume that M{\o}ller-Rosenfeld semiclassical gravity holds. We stress that, when working in nonlinear extensions of quantum theory, one must carefully consider situations that concern thermal solutions since proper (statistical) and improper thermal states are now inequivalent.

\subsection{M{\o}ller-Rosenfeld semiclassical gravity is not the semiclassical limit of quantum gravity}

It is often (e.g. see \cite{Wald1984,Kuo1993,Lowe2023}) argued that a semiclassical approximation to quantum gravity, taken to be the replacement of $T_{\mu\nu} \to \expval{\hat{T}_{\mu\nu}}$ in the classical Einstein field equations (thus leading to a limiting theory of the form of M{\o}ller-Rosenfeld semiclassical gravity), is appropriate when quantum fluctuations in the matter fields are suppressed but at the same time still overwhelm the fluctuations in the metric. The quantum backreaction effects of gravitons in the presence of a single matter field can be
comparable in magnitude to those of the matter field.  However, when $N$ matter fields are present, the quantum backreaction effects from the gravitons are $\mathcal{O}(1/N)$ compared to those of the matter.
One can then recover a semiclassical approximation to quantum gravity
by taking a large $N$ expansion (see Appendix \ref{app:Gravitational backreactions in the large N limit}).

It is however worth noting that, even though the form of the dynamical equations are the same, M{\o}ller-Rosenfeld semiclassical gravity and the semiclassical limit of quantum gravity make differing predictions in the same context of Hawking radiation if we make the 
(strong and non-obvious, though maybe common) assumption that there is a consistent treatment of measurements of the gravitational field within the latter. Even though the resulting Hawking-Gibbs state is improper in both cases, a measurement of the gravitational field would identify a particular sub-component of the state in the latter case.
The gravitational backreaction would then follow that of a proper mixture in the semiclassical limit of quantum gravity, as was highlighted throughout section \ref{Proper and improper mixed states}. As we have shown in this paper, this is different to the gravitational backreaction of an improper mixture in M{\o}ller-Rosenfeld semiclassical gravity. That is, an observer at $\mathcal{I}^+$ performing a Cavendish experiment will see different outcomes in the two theories: a one-shot experiment would give information about whether the observer lives in a quantum gravitational world or a M{\o}ller-Rosenfeld semiclassical world.

The quantum state used in the semiclassical Einstein field equations in the semiclassical limit of quantum gravity should thus always be interpreted as a \emph{proper mixture} to be consistent with the  behaviour expected from a unitary quantum gravity theory combined with measurements of the gravitational field. Importantly, this difference in behaviour between M{\o}ller-Rosenfeld semiclassical gravity and the semiclassical limit of quantum gravity is independent of the number of matter fields present, and should be understood as coming from the fact that the semiclassical limit of quantum gravity comes from a theory which is intrinsically unitary, while M{\o}ller-Rosenfeld semiclassical gravity is not. In other words, M{\o}ller-Rosenfeld semiclassical gravity is not the semiclassical limit of quantum gravity in the context of black hole spacetimes.\footnote{See \cite{Terno2024} for further discussion of the semiclassical limit of quantum gravity, with some comments on the present paper.}

\section{Acknowledgements}
We would like to thank Sean Hartnoll, Don Page and Aron Wall for interesting discussions related to the nature of the Hartle-Hawking state.
We acknowledge financial support from the
UK Quantum Communications Hub grant no. 
EP/T001011/1 and UK-Canada Quantum for Science research collaboration grant OPP640. 
S.F. is funded by a studentship from the Engineering and Physical Sciences Research Council. A.K. was supported in part by Perimeter Institute for
Theoretical Physics. Research at Perimeter Institute is supported by
the Government of Canada through the Department of Innovation, Science
and Economic Development and by the Province
of Ontario through the Ministry of Research, Innovation and Science. 

\appendix

\section{General Probabilistic Theories with nonlinear dynamics violate the MEP statistically}

\label{GPTs with nonlinear dynamics violate the MEP}

We extend the discussion of section \ref{Nonlinear QM violates the MEP statistically} to cover MEP violations of GPTs with nonlinear dynamics, which we now review \cite{Barnum2007}. In GPTs, a physical system is characterised by a state space $\Omega$ which is assumed to be a convex vector space. An element $\omega \in \Omega$ is called a state, and, by convexity, the probabilistic mixture $p_1 \omega_1 + p_2 \omega_2 \in \Omega$ is a state for any $\omega_1,\omega_2 \in \Omega$ and $p_1 + p_2 = 1$. A state is called \textit{pure} if it cannot be written as a convex combination of other states (i.e. if it is an extreme point of $\Omega$), and \textit{mixed} otherwise. 

The space of linear functionals $f: \Omega \to \mathbb{R}$ is denoted $A(\Omega)$, which is ordered in the sense that $f \leq g \Leftrightarrow f(\omega) \leq g(\omega) \, \forall \omega \in \Omega$. The zero and unit functionals $0_{A(\Omega)}$ and $1_{A(\Omega)}$, respectively, are defined so that $\forall \omega \in \Omega$, $0_{A(\Omega)}(\omega) = 0$ and $1_{A(\Omega)}(\omega) = 1$. An \textit{effect}, denoted $e_j$, is an element of $[0_{A(\Omega)},1_{A(\Omega)}]$, in the sense that it is a linear functional $e_j : \Omega \to [0,1]$. Effects are interpreted as events associated with the system considered, with occurrence probability $e_j(\omega)$ when the system is in state $\omega$. 

A (discrete) \textit{observable} is a mapping from a finite set $E$ to $A(\Omega)$:
\begin{align}
    e : E &\to A(\Omega) \\
    i &\mapsto e_i
\end{align}
satisfying $e_i \geq 0_{A(\Omega)} \, \forall i \in E$ and $\sum_{i\in E} e_i = 1_{A(\Omega)}$. Then, $\forall \omega \in \Omega$, $e_i(\omega) = p(i)$ where $p \in \Delta(E)$ is a probability weight (living in the set of all classical probability distributions over $E$, $\Delta(E)$), such that each $e_i$ is an effect. Here, $i \in E$ is then understood as a measurement outcome. The \textit{measurement} of an observable $e$ on a state $\omega$, denoted $e(\omega)$, is then defined to be the set of effects associated with that observable that sum to unit probability: 
\begin{equation}
    e(\omega) = \Big\{e_i | \sum_{i \in E} e_i(\omega) = 1\Big\} \, .
\end{equation}

An \textit{operation} is a linear mapping $\kappa : \Omega \to \Omega'$. Here, we will be interested in the case where $\Omega' = \Omega$, i.e. where $\kappa \in \mathcal{L}(\Omega)$. There is an associated dual linear transformation $\kappa^* : A(\Omega) \to A(\Omega)$ defined as $\kappa^*(e_i)(\omega) = e_i(\kappa(\omega))$ for all effects $e_i \in [0_{A(\Omega)},1_{A(\Omega)}]$ and states $\omega \in \Omega$. 

In standard quantum theory, $\omega \equiv \rho \in \mathscr{D}(\mathcal{H})$, $e_i \equiv \Tr(\Pi_i \, \cdot)$ for projective measurements where $\Pi_i$ is the projection operator associated with outcome $i$ of some observable (now seen as an operation) $\kappa \equiv \mathcal{O} \in \mathcal{L}(\mathcal{H})$, $e_i(\omega) = p(i) \equiv \Tr(\Pi_i \omega)$, and $e(\omega)$ is the set of all such $\Tr(\Pi_i \, \cdot)$ associated with that measurement. Then
\begin{equation}
    \expval{\mathcal{O}}_\rho = \sum_{i \in E} \Tr(\Pi_i \mathcal{O} \rho) \equiv \sum_{i \in E} e_i(\kappa(\omega))
\end{equation}
as $\sum_{i \in E} \Pi_i = 1$.

We now provide a generalisation of lemma \ref{Expval proper mixed states} to GPTs, whose statement is again both trivial yet important to highlight.
\begin{lemma}
    \label{Expval proper mixed states GPTs}
    Let
    \begin{enumerate}
        \item $\{\omega_i\}_i \in \Omega$ be an ensemble of pure states with associated probabilities $\{p_i\}_i, \, \sum_i p_i = 1$, where $\Omega$ is the convex state space of the GPT, 
        \item $\rho = \sum_i p_i \omega_i \in \Omega$ be a mixed state,
        \item $\kappa \in \mathcal{L}(\Omega)$ be an operation on $\Omega$,
        \item $e = \{e_j | \sum_{j \in E} e_j(\rho) = 1\}$ be a measurement on $\rho$, where $e_j : \Omega \to [0,1]$ are effects and $E$ is the set of possible measurement outcomes.
    \end{enumerate}
    Then
    \begin{equation}
        \sum_{j\in E} e_j(\kappa(\rho)) = \sum_i p_i \sum_{j\in E} e_j(\kappa(\omega_i)) \, .
    \end{equation}
\end{lemma}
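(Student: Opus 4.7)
The plan is to chain together two linearities: that of the operation $\kappa$ and that of each effect $e_j$. The hypothesis $\kappa \in \mathcal{L}(\Omega)$ gives us $\kappa(\rho) = \kappa\bigl(\sum_i p_i \omega_i\bigr) = \sum_i p_i \kappa(\omega_i)$, after which membership of $e_j$ in $A(\Omega)$ (the space of linear functionals on the convex state space) yields $e_j(\kappa(\rho)) = \sum_i p_i e_j(\kappa(\omega_i))$. Summing over $j \in E$ and swapping the two finite sums finishes the argument.

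Concretely, I would structure the proof as a short three-step display. First, apply the definition $\rho = \sum_i p_i \omega_i$ and linearity of $\kappa$ to rewrite $\kappa(\rho) = \sum_i p_i \kappa(\omega_i)$. Second, apply $e_j$ and use that $e_j$ is a linear functional in $A(\Omega)$ to pull the convex combination outside, obtaining $e_j(\kappa(\rho)) = \sum_i p_i \, e_j(\kappa(\omega_i))$. Third, sum over outcomes $j \in E$ and invoke Fubini for finite sums to interchange the summations, yielding the claimed identity. The identity $\sum_{j \in E} e_j = 1_{A(\Omega)}$ is not actually needed for the equality itself, but I would note in passing that evaluating both sides at $\rho$ reassuringly gives $1 = \sum_i p_i \cdot 1$, consistent with probability normalisation.

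There is no real obstacle here: the statement is essentially a packaging of linearity together with the paper's standing assumption that preparation statistics of an ensemble are independent of outcome statistics of a later measurement. The only point worth emphasising in the proof is precisely where that operational assumption is used, namely in identifying $p_i \, e_j(\kappa(\omega_i))$ with the joint probability of preparing $\omega_i$ and later obtaining outcome $j$ after applying $\kappa$, so that the right-hand side has the required operational meaning as a convex mixture of the $\omega_i$-conditional statistics. Given this, the lemma is an immediate formal analogue of Lemma~\ref{Expval proper mixed states}, and I would keep the proof correspondingly brief (essentially a three-line display followed by one sentence of interpretation).
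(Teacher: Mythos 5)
Your proof is correct and follows essentially the same route as the paper's: linearity of the operation $\kappa$, linearity of the effects $e_j$, and interchange of the finite sums over $i$ and $j$. (The paper's proof is purely this linearity argument; your closing remark about the preparation-independence assumption is interpretive framing rather than a needed step, since the lemma as stated is a mathematical identity about the convex decomposition $\rho=\sum_i p_i\omega_i$.)
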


\begin{proof}
    This follows by the linearity of operations and effects:
    \begin{align}
        \sum_{j\in E} e_j(\kappa(\rho)) &= \sum_{j\in E} e_j(\kappa(\sum_i p_i \omega_i)) \\ &= \sum_{j\in E} e_j(\sum_i p_i \kappa(\omega_i)) \\ &= \sum_{j\in E} \sum_i p_i e_j(\kappa(\omega_i))
    \end{align}
    and the result follows (since $E$ is taken to be finite).
\end{proof}

We now generalise the discussion of the MEP to cover GPTs.
\begin{quote}
  {\bf MEP:}  \qquad   Let $\{\omega_i\}_i \in \Omega$ be an ensemble of pure states with associated probabilities $\{p_i\}_i, \, \sum_i p_i = 1$, where $\Omega$ is the state space representing a system $S$, and $\rho_{\text{proper}}$ be the corresponding proper mixed state.   Let $\omega \in \Omega' \supset \Omega$ be a state in the state space representing the combined system $S+A$, and $\rho_{\text{improper}}$ be the corresponding improper mixed state of $S$\footnote{That is, the corresponding state on $S$ used for experiments on $S$ only.}.   Suppose $\rho_{\text{proper}}= \rho_{\text{improper}}$.   
  Then no experiment on $S$ can distinguish these two cases.  
\end{quote}

We then generalise lemma \ref{Lemma Nonlinear qm violates the MEP} to GPTs:
\begin{lemma}
    \label{Lemma Nonlinear GPTs violates the MEP}
    Let 
    \begin{enumerate}
        \item $\{\omega_i(t_0)\}_i \in \Omega$ be an ensemble of pure states with associated probabilities $\{p_i\}_i, \, \sum_i p_i = 1$, where $\Omega$ is the (convex) state space of the GPT, 
        \item $\rho(t_0) = \sum_i p_i \omega_i(t_0) \in \Omega$,
        \item $T_\rho(t,t_0) = T_\rho(t,t_1)T_\rho(t_1,t_0) : \Omega \to \Omega$ be a time-evolution operator for the state $\rho(t_0)$ such that $\rho(t) = T_\rho(t,t_0)[\rho(t_0)] \in \Omega$, and likewise we write $\omega_i(t) = T_{\omega_i}(t,t_0)[\omega_i(t_0)]$,
        \item $\kappa \in \mathcal{L}(\Omega)$ be an operation on $\Omega$,
        \item $e = \{e_j | \sum_{j \in E} e_j(\rho) = 1\}$ be a measurement on $\rho$, where $e_j : \Omega \to [0,1]$ are effects and $E$ is the set of possible measurement outcomes.
    \end{enumerate}
    Then if 
    \begin{equation}
        \label{Condition nonlinear GPTs violate the MEP}
        \sum_{j\in E} e_j(\kappa(\rho(t)))  \neq \sum_i p_i \sum_{j\in E} e_j(\kappa(\omega_i(t))) \, ,
    \end{equation}
    the MEP is violated \textit{statistically} for a system described by $\rho(t)$ through measurements of $\kappa$.
\end{lemma}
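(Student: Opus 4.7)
The plan is to mirror step-for-step the argument used for standard quantum mechanics in Lemma \ref{Lemma Nonlinear qm violates the MEP}, replacing density operators, traces and Born-rule expectation values by their convex-geometric analogues: states in $\Omega$, marginalisation operations over the ancillary factor, and sums of effect values. The pieces needed are already in place, namely Lemma \ref{Expval proper mixed states GPTs} (which plays the role that Lemma \ref{Expval proper mixed states} played in the Hilbert-space proof) and the GPT analogue of the MEP stated just above.

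First I would set up the two preparations at time $t_0$. On the one hand, form the proper mixture $\rho_{\text{proper}}(t_0) = \sum_i p_i \omega_i(t_0) \in \Omega$ from the ensemble $\{(\omega_i(t_0), p_i)\}_i$. On the other hand, invoke the composite system $\Omega' \supset \Omega$ and select a joint state $\omega(t_0) \in \Omega'$ whose marginal on $S$ coincides with $\rho_{\text{proper}}(t_0)$; this marginal is by definition $\rho_{\text{improper}}(t_0)$, and so $\rho_{\text{proper}}(t_0) = \rho_{\text{improper}}(t_0)$. This is precisely the GPT content of comparing a classical preparation of pure states to a restriction of a joint state, and is the analogue of the entangled purification used in the quantum case.

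Next I would compute measurement statistics at time $t$ in each scenario. For the proper mixture, lemma \ref{Expval proper mixed states GPTs} directly gives
\begin{equation}
\sum_{j\in E} e_j\bigl(\kappa(\rho_{\text{proper}}(t))\bigr) = \sum_i p_i \sum_{j\in E} e_j\bigl(\kappa(\omega_i(t))\bigr) \, ,
\end{equation}
since the statistics of outcomes of $\kappa$ on a classical mixture are by construction the weighted statistics on each pure preparation, independently of whether the global time evolution $T_\rho$ is linear. For the improper mixture, the system evolves under $T_\rho(t,t_0)$ applied to $\rho_{\text{improper}}(t_0) = \rho(t_0)$, so that $\rho_{\text{improper}}(t) = \rho(t)$. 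The hypothesis \eqref{Condition nonlinear GPTs violate the MEP} of the lemma then reads exactly
\begin{equation}
\sum_{j\in E} e_j\bigl(\kappa(\rho_{\text{improper}}(t))\bigr) \neq \sum_i p_i \sum_{j\in E} e_j\bigl(\kappa(\omega_i(t))\bigr) \, ,
\end{equation}
and combining the two displayed equations yields distinct outcome distributions for $\rho_{\text{proper}}(t)$ and $\rho_{\text{improper}}(t)$, which is a statistical MEP violation through measurements of $\kappa$.

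The only delicate step is the construction of the joint state $\omega(t_0)$ whose marginal on $S$ equals $\rho_{\text{proper}}(t_0)$: in general GPTs, not every state admits such a purification, and different choices of composite structure on $\Omega'$ lead to different notions of marginal. However, for the statement we only need \emph{existence} of one such $\omega(t_0)$ in some enlarged $\Omega'$, which is precisely the hypothesis tacitly built into the GPT version of the MEP above; whenever it does fail to exist, the MEP is vacuous for the ensemble considered and there is nothing to prove. All other steps then follow from linearity of operations and effects, exactly as in the quantum case.
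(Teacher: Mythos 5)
Your proposal is correct and follows essentially the same route as the paper's proof: the proper mixture remains an ensemble at time $t$, so its statistics are the $p_i$-weighted pure-state statistics by lemma \ref{Expval proper mixed states GPTs}, while the improper preparation evolves as a whole under $T_\rho(t,t_0)$, and condition \eqref{Condition nonlinear GPTs violate the MEP} then separates the two. Your only addition is to make explicit the existence of a composite state in $\Omega'$ whose marginal reproduces $\rho(t_0)$, an assumption the paper leaves implicit in its GPT formulation of the MEP, and your handling of it is consistent with the paper's intent.
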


\begin{proof}
    If $\rho_\text{proper}(t_0)$ is a proper mixture of $\{\omega_i(t_0)\}_i$, then it must be so at all times $t$, i.e.
    \begin{equation}
        \rho_\text{proper}(t) = \sum_i p_i \omega_i(t)
    \end{equation}
    following the same argument as in theorem \ref{Nonlinear qm violates the MEP}, with
    \begin{equation}
        \sum_{j\in E} e_j(\kappa(\rho_\text{proper}(t))) = \sum_i p_i \sum_{j\in E} e_j(\kappa(\omega_i(t)))
    \end{equation}
    by lemma \ref{Expval proper mixed states GPTs}.
    
    On the other hand, if $\rho(t_0) = \sum_i p_i \omega_i(t_0)$ is not a proper mixture of $\{\omega_i(t_0)\}_i$, then in general its time evolution is given by
    \begin{equation}
        \rho_\text{improper}(t) = T_\rho(t,t_0)\Big[\sum_i p_i \omega_i(t_0)\Big]
    \end{equation}
    so that the expectation value of measurements of $\kappa$ on $\rho(t)$ is
    \begin{multline}
        \sum_{j\in E} e_j(\kappa(\rho_\text{improper}(t))) \\ = \sum_{j\in E} e_j(\kappa(T_\rho(t,t_0)\Big[\sum_i p_i \omega_i(t_0)\Big]))
    \end{multline}

    Thus, provided condition \eqref{Condition nonlinear GPTs violate the MEP}, the MEP is violated \textit{statistically} for a system described by $\rho(t)$ through measurements of $\kappa$.
\end{proof}

In particular, if $T_\rho(t,t_0)$ is not linear then the MEP is violated for some $\kappa$ and $e_j$ - again, under the constraints of condition \eqref{Condition nonlinear GPTs violate the MEP}, which may not be satisfied if e.g. $\forall i, \, \omega_i(t) \in \ker(\kappa)$ and $\rho(t) \in \ker(\kappa)$. We may now extend theorem \ref{Nonlinear qm violates the MEP} to GPTs:
\begin{theorem}
    Let
    \begin{enumerate}
        \item $\{\omega_i(t_0)\}_i \in \Omega$ be an ensemble of pure states with associated probabilities $\{p_i\}_i, \, \sum_i p_i = 1$, where $\Omega$ is the (convex) state space of the GPT, 
        \item $\rho(t_0) = \sum_i p_i \omega_i(t_0) \in \Omega$,
        \item $T_\rho(t,t_0) = T_\rho(t,t_1)T_\rho(t_1,t_0) : \Omega \to \Omega$ be a time-evolution operator for the state $\rho(t_0)$ such that $\rho(t) = T_\rho(t,t_0)[\rho(t_0)] \in \Omega$, and likewise we write $\omega_i(t) = T_{\omega_i}(t,t_0)[\omega_i(t_0)]$.
    \end{enumerate}
    Then if 
    \begin{equation}
        \rho(t)  \neq \sum_i p_i \omega_i(t) \, ,
    \end{equation}
    the MEP is violated for a system described by $\rho(t)$.
\end{theorem}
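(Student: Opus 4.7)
The strategy is to mirror the proof of Theorem~1 in the Hilbert-space case, with Lemma~3 (the GPT nonlinear-dynamics MEP-violation lemma) taking the role of Lemma~2, and the effect-separation of states in a GPT replacing the existence of a distinguishing projection operator.

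First I would set $\chi(t) := \sum_i p_i\,\omega_i(t)$, which belongs to $\Omega$ by convexity of the state space. By hypothesis $\rho(t) \neq \chi(t)$. Next I would invoke the operational premise built into the GPT framework used in the appendix, namely that $A(\Omega)$ — the full space of linear functionals on $\Omega$ — separates states: if $\rho(t) \neq \chi(t)$ then there is some $f \in A(\Omega)$ with $f(\rho(t)) \neq f(\chi(t))$, and an affine rescaling $f \mapsto (f - \min_\Omega f)/(\max_\Omega f - \min_\Omega f)$ yields an effect $e \in [0_{A(\Omega)}, 1_{A(\Omega)}]$ with $e(\rho(t)) \neq e(\chi(t))$.

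I would then embed $e$ in a valid two-outcome measurement $\{e_0, e_1\} := \{e,\, 1_{A(\Omega)} - e\}$ with outcome set $E = \{0,1\}$, and take $\kappa$ to be the identity operation on $\Omega$. Linearity of effects gives
\begin{equation}
    e(\chi(t)) = e\!\left(\sum_i p_i\,\omega_i(t)\right) = \sum_i p_i\, e(\omega_i(t)),
\end{equation}
so that, combining with $e(\kappa(\rho(t))) = e(\rho(t)) \neq e(\chi(t))$, one obtains
\begin{equation}
    \sum_{j \in E} e_j(\kappa(\rho(t))) \neq \sum_i p_i \sum_{j \in E} e_j(\kappa(\omega_i(t))),
\end{equation}
which is precisely condition \eqref{Condition nonlinear GPTs violate the MEP}. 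Applying Lemma~3 to this measurement then yields a statistical MEP violation for the system described by $\rho(t)$.

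The main obstacle is the effect-separation step. In GPTs that do not satisfy a no-restriction hypothesis, or are not tomographic, two distinct elements of the mathematical state space could in principle coincide operationally, which would trivialise the hypothesis $\rho(t) \neq \chi(t)$. Under the convention adopted in this appendix, where $A(\Omega)$ is the full dual and distinct elements of $\Omega$ are understood as operationally distinct, or via a Hahn--Banach argument on the convex hull of $\Omega$ under the usual closedness and boundedness assumptions, separation is automatic and the remainder of the proof is a routine transcription of the Hilbert-space case.
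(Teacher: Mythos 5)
Your overall strategy (separate $\rho(t)$ from $\chi(t):=\sum_i p_i\,\omega_i(t)$ by a dual functional, then feed the resulting statistical difference into the GPT lemma) is the same as the paper's, and the separation step itself is unobjectionable. However, the final reduction to condition \eqref{Condition nonlinear GPTs violate the MEP} fails as you have set it up. With $\kappa=\mathrm{id}$ and the complete two-outcome measurement $\{e_0,e_1\}=\{e,\,1_{A(\Omega)}-e\}$, the quantity appearing in the lemma is
\begin{equation}
\sum_{j\in E} e_j\bigl(\kappa(\rho(t))\bigr)=1_{A(\Omega)}(\rho(t))=1
\qquad\text{and}\qquad
\sum_i p_i \sum_{j\in E} e_j\bigl(\kappa(\omega_i(t))\bigr)=\sum_i p_i=1 \, ,
\end{equation}
so the two sides are \emph{equal}, not different: summing over all outcomes of a normalised measurement washes out exactly the single-effect difference $e(\rho(t))\neq e(\chi(t))$ that you correctly established. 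The claimed inequality "which is precisely condition \eqref{Condition nonlinear GPTs violate the MEP}" is therefore false for your choice of $\kappa$ and $E$, and lemma \ref{Lemma Nonlinear GPTs violates the MEP} cannot be invoked this way.

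The fix is to put the distinguishing functional into the operation rather than into one effect of a complete measurement, which is what the paper does: it chooses $\kappa$ and an effect so that the composed functional $\kappa^*(e_i)=e_i\circ\kappa$ separates $\rho(t)$ from $\chi(t)$, mirroring the quantum proof of theorem \ref{Nonlinear qm violates the MEP}, where the projection $P$ plays the role of $\kappa$ and the distinguishing quantity is the expectation $\Tr(P\,\cdot)$ (in the paper's convention $\sum_j e_j(\kappa(\cdot))$ is the analogue of $\langle\mathcal{O}\rangle$, not a normalised outcome sum). Alternatively, you could keep $\kappa=\mathrm{id}$ but argue directly that the probability of the single outcome associated with $e$ differs between the proper preparation, for which linearity gives $\sum_i p_i\,e(\omega_i(t))$, and the improper one, for which it is $e(\rho(t))$; that already exhibits a statistical MEP violation without appealing to the summed condition of the lemma. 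As written, though, the last step of your argument does not go through.
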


\begin{proof}
    Let $\chi(t) := \sum_i p_i \omega_i(t)$. If $\rho(t) \neq \chi(t)$ then $\exists \kappa^*(e_i) \in \mathcal{L}(\Omega,\mathbb{R})$ such that $\kappa^*(e_i)(\rho(t)) \neq \kappa^*(e_i)(\chi(t))$, so $e_i(\kappa(\rho(t))) \neq e_i(\kappa(\sum_i p_i \omega_i(t)))$. Thus, by lemma \ref{Lemma Nonlinear GPTs violates the MEP}, the MEP is violated through such measurements.
\end{proof}

\section{Derivations of Hawking radiation in the Euclidean picture}

\label{app:Hawking Euclidean}

Below, we recall two derivations of the Hartle-Hawking Gibbs state in the Euclidean picture, both of which yield an improper mixture.

\subsection{à la Unruh-Tolman}

\label{app:Unruh-Tolman}

For completeness, we here summarize and paraphrase the derivation of Hawking radiation in Euclidean space presented in \cite{Almheiri2021}.  We stress there is no novelty in the discussions in this and the next section. 

We start with the usual (Minkowskian) Schwarzschild metric
\begin{equation}
    ds^2 = -\Big(1 - \frac{R_S}{r}\Big) c^2 dt^2 + \frac{dr^2}{1-\frac{R_S}{r}} + r^2 d\Omega_2^2
\end{equation}
where $R_S = 2GM/c^2$ is the black hole's Schwarzschild radius. Since the angular directions $d\Omega_2^2$ does not play a role in the following, we omit them for conciseness. The near-horizon metric, obtained through a change of coordinate $r \mapsto R_S(1+\rho^2/4R_S^2), t \mapsto 2 R_S \tau$ in an expansion $\rho \ll R_S$, is the flat Minkowski metric $ds^2 \approx -\rho^2 c^2 d\tau^2 + d\rho^2$ in hyperbolic coordinates. A free-falling observer will thus see nothing special at the horizon $r = R_S$, and the geometry can be analytically extended. Regardless, an observer near the event horizon at fixed $r$ is accelerating to avoid falling in -- in near-horizon coordinates, an observer at fixed $\rho$ is uniformly accelerating in Minkowski spacetime, with proper acceleration $a = \frac{c^2}{\rho}$.

Acceleration in a flat Minkowski spacetime induces an apparent horizon and a Rindler wedge \cite{Earman2011}. The restriction of the Minkowski vacuum state to the subalgebra of field observables in the Rindler wedge yields a so-called KMS state (the algebraic quantum field theory equivalent to a Gibbs state) \cite{Bisognano1975,Bisognano1976}. Such a thermal state is thus improper: it arises from an objective restriction of the available operations and effects available to the observer. Thus, a uniformly accelerating observer experiences thermal radiation -- this is the Unruh effect \cite{Unruh1976}. To obtain the temperature that this observer experiences, one Wick rotates the time coordinate $\theta := -i \tau$ and define $x^0_E := \rho \sin(\theta), x^1 := \rho \cos(\theta)$. The new coordinates $(x^0_E,x^1)$ and $(\rho,\theta)$ are, respectively, Cartesian and polar coordinates on $\mathbb{E}^2$. Now, an observer at constant $\rho$ moves in a circle of circumference $2\pi \rho$.

The partition function in a thermal state is $Z = \Tr[e^{-\frac{\beta}{\hbar} H}]$. Any observable, e.g. $\Tr[\mathcal{O}(\tau) \mathcal{O}(0) e^{-\frac{\beta}{\hbar} H}]$, is periodic under $\tau \mapsto \tau + i \beta$ (or, in imaginary time, under $\theta \mapsto \theta + \beta$), since $\mathcal{O}(\tau) = e^{\frac{i}{\hbar}H\tau} \mathcal{O} e^{-\frac{i}{\hbar}H\tau}$ and using the cyclic property of the trace. Thus, $c \beta$ is the length of the Euclidean time evolution, performed over a circle in Euclidean space. The temperature associated to the partition function is $k_B T = \hbar/\beta$, hence the temperature that an accelerated observer at constant $\rho$ feels is
\begin{equation}
    T_\text{proper} = \frac{\hbar}{k_B \beta} = \frac{c \hbar}{2\pi k_B \rho} = \frac{\hbar}{k_B c}\frac{a}{2\pi} \, .
\end{equation}
This is the temperature of the Unruh effect perceived by a uniformly accelerating thermometer \cite{Unruh1976}. It is however felt by an observer close to the horizon, and decreases as we move away from the black hole. This decrease in temperature is consistent with thermal equilibrium in the presence of a gravitational potential. For a spherically symmetric configuration, the temperature obeys the Tolman relation \cite{Tolman1930}
\begin{equation}
    T_\text{proper}(r) \sqrt{-g_{tt}(r)} = \frac{1}{2\pi} \, .
\end{equation}
This formula is also valid in the full (Lorentzian) geometry, so we can use it to find the temperature that an observer would experience far from the black hole horizon. In this case, we have $g_{tt} = -1$ (Schwarzschild is asymptotically Minkowski) and go to large $r \gg R_S$, so that
\begin{equation}
    T_{H} = T_\text{proper}(r \gg R_S) = \frac{c\hbar}{4\pi k_B R_S} = \frac{\hbar c^3}{8\pi G M k_\text{B}}
\end{equation}
which recovers the Hawking temperature.

\subsection{Euclidean path integral derivation}

\label{app:Euclidean path integral}

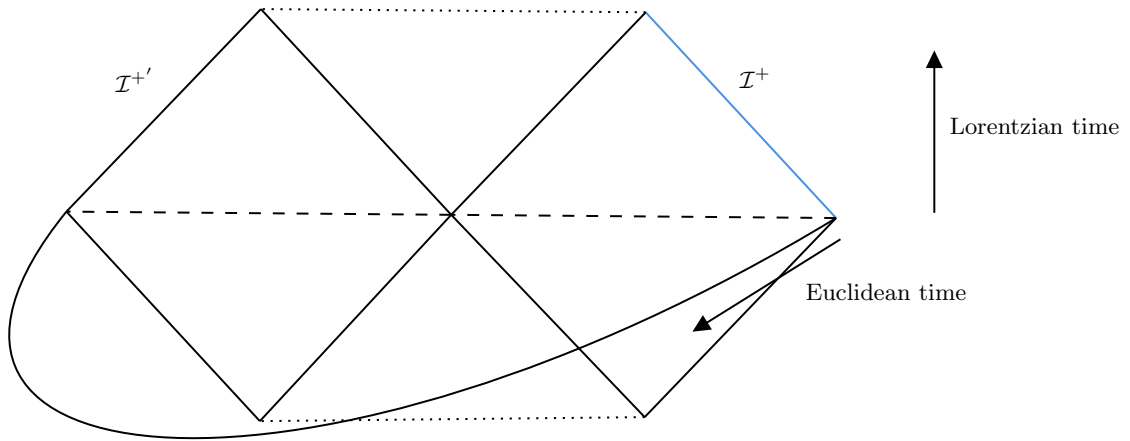
\begin{figure*}
    \centering
    \begin{tikzpicture}[x=0.75pt,y=0.75pt,yscale=-1,xscale=1]
\draw [color={rgb, 255:red, 74; green, 144; blue, 226 }  ,draw opacity=1 ]   (388.79,30.44) -- (484.79,134.29) ;
\draw    (194.71,28.82) -- (290.7,132.67) ;
\draw    (290.7,132.67) -- (388.79,30.44) ;
\draw    (96.62,131.05) -- (194.71,28.82) ;
\draw    (194.18,236.81) -- (96.62,131.05) ;
\draw    (388.26,234.78) -- (290.7,132.67) ;
\draw    (290.7,132.67) -- (194.18,236.81) ;
\draw    (484.79,134.29) -- (388.26,234.78) ;
\draw  [dash pattern={on 0.84pt off 2.51pt}]  (388.26,234.78) -- (194.18,236.81) ;
\draw  [dash pattern={on 0.84pt off 2.51pt}]  (388.79,30.44) -- (194.71,28.82) ;
\draw  [dash pattern={on 4.5pt off 4.5pt}]  (290.7,132.67) -- (96.62,131.05) ;
\draw  [dash pattern={on 4.5pt off 4.5pt}]  (484.79,134.29) -- (290.7,132.67) ;
\draw    (96.62,131.05) .. controls (-16.33,270.33) and (221,295) .. (484.79,134.29) ;
\draw    (534.33,131.67) -- (534.33,52.67) ;
\draw [shift={(534.33,49.67)}, rotate = 90] [fill={rgb, 255:red, 0; green, 0; blue, 0 }  ][line width=0.08]  [draw opacity=0] (8.93,-4.29) -- (0,0) -- (8.93,4.29) -- cycle    ;
\draw    (487,145) -- (414.88,190.08) ;
\draw [shift={(412.33,191.67)}, rotate = 327.99] [fill={rgb, 255:red, 0; green, 0; blue, 0 }  ][line width=0.08]  [draw opacity=0] (8.93,-4.29) -- (0,0) -- (8.93,4.29) -- cycle    ;

\draw (468,165.67) node [anchor=north west][inner sep=0.75pt]   [align=left] {Euclidean time};
\draw (540.67,82.33) node [anchor=north west][inner sep=0.75pt]   [align=left] {Lorentzian time};
\draw (434.67,57.07) node [anchor=north west][inner sep=0.75pt]    {$\mathcal{I}^{+}$};
\draw (120,57.07) node [anchor=north west][inner sep=0.75pt]    {$\mathcal{I}^{+'}$};
\end{tikzpicture}
    \caption{Extended Euclidean Schwarzschild black hole spacetime \cite{Hartman2015}. The Euclidean path integral yields a pure entangled thermofield double state on $\mathcal{I}^{+'} \cup \mathcal{I}^+$. The quantum state on $\mathcal{I}^+$ is thus an improper mixture -- the Hartle-Hawking state.}
    \label{fig:Euclidean Schwarzschild black hole}
\end{figure*}

Again, for completeness, we summarize and paraphrase derivations of the Hartle-Hawking state using Euclidean path integrals provided in \cite{Hartman2015,Almheiri2021}. We start with the analytically continued Euclidean spacetime
\begin{equation}
    ds^2 = \Big(1 - \frac{R_S}{r}\Big) c^2 dt_E^2 + \frac{dr^2}{1 - \frac{R_S}{r}} + r^2 d\Omega_2^2
\end{equation}
where $t_E := -it$ is the imaginary (Euclidean) time, with $t_E = t_E + \beta$ (as for $\theta$ above). This spacetime only has $r>R_S$, i.e. there is no interior, since $r-R_S$ is like the radial coordinate in polar coordinates and $r=R_S$ is the origin; however, it is analytically extended in the sense that the $t_E = 0$ slice of the Euclidean spacetime extends in $\bar{\mathcal{M}}$. This is shown in Figure \ref{fig:Euclidean Schwarzschild black hole}. To avoid a conical singularity at $r=R_S$ one needs to set $c \beta = 4\pi R_S$. This is necessary to implement the Einstein equivalence principle: an observer falling into an evaporating black hole should not see anything special at the horizon.

Sending $t_E \mapsto t_E + \frac{\beta}{2}$ takes one to the other side of the Penrose diagram in the maximal analytic extension, i.e. from $\mathcal{M}$ to $\bar{\mathcal{M}}$ and vice versa. Thus, just like in Rindler space, we get to the other side of the horizon by going half way around the Euclidean circle.

Imaginary-time periodicity implies a temperature, as we have seen previously. That is, we work with a QFT at finite temperature, and the expectation value of observables should be taken with respect to a thermal state. Real-time evolution by $e^{-\frac{i}{\hbar}Ht}$ corresponds to a path integral on a Lorentzian spacetime, while imaginary-time evolution $e^{-\frac{\beta}{\hbar}H}$ is computed using a path integral on a Euclidean geometry. The Euclidean path integral from the hypersurface $\bar{\Sigma} \cup \Sigma$ living in $\mathcal{\bar{M}} \times \mathcal{M}$ to $\mathcal{I}^{+'} \cup \mathcal{I}^+$ on this extended Euclidean black hole geometry can be computed \cite{Hartman2015} and yields an entangled thermofield double state \eqref{Thermofield double state} on $\mathcal{I}^{+'} \cup \mathcal{I}^+$. Thus, the reduced density matrix on $\mathcal{I}^+$ is the improper Hartle-Hawking state \eqref{density matrix spacetime}.

\section{Gravitational backreactions in the large $N$ limit}

\label{app:Gravitational backreactions in the large N limit}

We here recall the argument which states that a semiclassical approximation to quantum gravity is appropriate when quantum fluctuations in the matter fields are suppressed but still overwhelm the fluctuations in the metric.  
As Wald \cite{Wald1984} puts it: 
\begin{quote}
    \enquote{In the context of quantum field theory in curved spacetime, it is natural to postulate that the back-reaction effects of the quantum field on the gravitational field will be governed by the semiclassical Einstein equation,
    \begin{equation}
        \label{eqn:Wald semiclassical EFE}
        G_{ab} = 8\pi \expval{\hat{T}_{ab}}{\Psi} \quad ;
    \end{equation}
    i.e., it is physically possible for the spacetime to be $(M,g_{ab})$ and for the quantum field to be in state $\Psi$ on $(M,g_{ab})$ if and only if equation \eqref{eqn:Wald semiclassical EFE} is satisfied. Actually, equation \eqref{eqn:Wald semiclassical EFE} would \emph{not} be expected to arise as the lowest approximation to a quantum field theory of gravity coupled to a matter field. This is because in the full theory one would expect to have $\expval{\hat{G}_{ab}} = 8\pi \expval{\hat{T}_{ab}}$ hold exactly, where $\hat{G}_{ab}$ is the full Einstein operator and the state implicit in the expectation values now includes the degrees of freedom of the gravitational field. Furthermore, one would expect that $\hat{G}_{ab}$ would be given in terms of the metric operator by the same formula as holds classical, $\hat{G}_{ab} = G_{ab}[\hat{g}_{cd}]$. However, since $G_{ab}$ is a nonlinear function of $g_{cd}$ we expect $\expval{\hat{G}_{ab}} \neq G_{ab}[\expval{\hat{g}_{cd}}]$. Indeed, if we write $\hat{g}_{ab} = g^C_{ab} \hat{I} + \hat{\gamma}_{ab}$ -- where $g^C_{ab}$ is a classical solution of Einstein's equation and $\hat{I}$ is the identity operator -- and if we keep only terms quadratic in $\hat{\gamma}_{ab}$ in the formula for $\hat{G}_{ab}$, then $\expval{\hat{G}_{ab}}$ and $G_{ab}[\expval{\hat{g}_{ab}}]$ will differ by $-8\pi \expval{\hat{t}_{ab}}$, where $\hat{t}_{ab}$ is given in terms of $\hat{\gamma}_{ab}$ by a formula very similar to that of $\hat{T}_{ab}$ in terms of $\hat{\phi}$ [...] and the contribution from this term should be comparable to that of $\expval{\hat{T_{ab}}}$. One can then interpret this fact as saying that the quantum back-reaction effects caused by \emph{gravitons} [...] are as important as that of any other quantum field, and should not be neglected in equation \eqref{eqn:Wald semiclassical EFE}. Nevertheless, one can justify equation \eqref{eqn:Wald semiclassical EFE} in terms of a systematic approximation to a full quantum field theory including gravitation as follows. If we have $N$ matter fields present, then, roughly speaking, the effects of the matter fields will be $N$ times as important as that of gravitons. Hence, in the limit of large $N$, the neglect of gravitons should be justified, and one will obtain equation \eqref{eqn:Wald semiclassical EFE} [...] as the lowest approximation in a \enquote{$1/N$ expansion} of the full theory of quantum gravity coupled to matter. In any case, equation \eqref{eqn:Wald semiclassical EFE} should at least provide a qualitative indication of the back reaction effects produced by quantum fields on the gravitational field.}
\end{quote}

In a $1/N$ expansion of quantum gravity with $N$ matter fields in the semiclassical limit, we indeed have \cite{Lowe2023}
\begin{equation}
    \expval{\hat{T}_{\mu\nu}} = T_{\mu\nu}^C + \sum_{l=1}^{\infty} \hbar^l \expval{\hat{T}_{\mu\nu}^{(l)}} = T_{\mu\nu}^C + \mathcal{O}(N\hbar)
\end{equation}
where $T_{\mu\nu}^C$ is the classical contribution (given by the one point functions $\phi_i^{\text{cl}} = \expval{\hat{\phi}_i}$ only) to the energy-momentum tensor and $\hat{T}_{\mu\nu}^{(l)}$ is the $l$-th order quantum correction to the energy-momentum tensor. Likewise we can make an expansion of the metric
\begin{equation}
    g_{\mu\nu} = g_{\mu\nu}^C + \sum_{n=1}^\infty \hbar^n g_{\mu\nu}^{(n)} \, .
\end{equation}
In the $1/N$ expansion of quantum gravity, one can then substitute in the $g^{(n)}_{\mu\nu}$ with $n<l$ into equation \eqref{eqn:Wald semiclassical EFE} using $\expval{\hat{T}_{\mu\nu}^{(l)}}$ to compute the next higher-order correction $g_{\mu\nu}^{(l)}$ \cite{Lowe2023}. In particular, provided $N\hbar \ll 1$ (while $N \gg 1$ and $\hbar \ll 1$), we have that the leading order corrections to the classical metric $g_{\mu\nu}^C$ in the semiclassical Einstein field equations are from the $\mathcal{O}(N\hbar)$ contributions in the energy-momentum tensor. Following Wald's argument, we then have that
\begin{align}
    &\expval{\hat{G}_{ab}} \stackrel{\mathcal{O}(\hat{\gamma}_{ab}^2)}{\approx} G_{ab}[g^C_{ab}] - 8\pi \underbrace{\expval{\hat{t}_{ab}}}_{\mathcal{O}(\hbar)} = 8\pi\underbrace{\expval{\hat{T}_{ab}}}_{T_{\mu\nu}^C + \mathcal{O}(N\hbar)} \\ \Rightarrow &G_{ab}[g^C_{ab}] \stackrel{\mathcal{O}(N\hbar)}{\approx} 8\pi\expval{\hat{T}_{ab}} + \mathcal{O}(\hat{\gamma}_{ab}^3,\hbar)
\end{align}
Thus, in the limit where $N\hbar \ll 1$ with $N \gg 1$, the semiclassical Einstein field equations are indeed recovered. These contributions are negligible at SI scales, which is why classical GR is so successful. The first quantum mechanical post-GR corrections in a 1/$N$ perturbative expansion, which come into play at Planck scales for large $N$, yield the semiclassical Einstein field equations to that order.

\bibliographystyle{unsrt}
\bibliography{library}

\end{document}